\newcommand{\comment}[1]{}
\NewDocumentCommand\Win{gg}{
    \textit{Win}\ensuremath{^{\IfNoValueTF{#1}{}{\textit{#1}}}_{\IfNoValueTF{#2}{}{\textit{#2}}}}\xspace
}
\NewDocumentCommand\width{g}{
    \textit{width}\ensuremath{_{\IfNoValueTF{#1}{}{#1}}}\xspace
}
\NewDocumentCommand\row{g}{
    \textit{row}\ensuremath{_{\IfNoValueTF{#1}{}{#1}}}\xspace
}
\newcommand{\depth}{\textit{depth}\xspace}
\newcommand{\shift}{\textit{shift}\xspace}
\newcommand{\lateral}{\textit{Lateral}\xspace}
\newcommand{\floor}[1]{\left\lfloor #1 \right\rfloor}
\newcommand{\ptcom}[1]{\ifthenelse{\boolean{showcomments}}{\textcolor{cyan}{[\textit{PT: #1}]}}{}}
\newcommand{\kvgcom}[1]{\ifthenelse{\boolean{showcomments}}{\textcolor{violet}{[\textit{KvG: #1}]}}{}}
\newcommand{\mpcom}[1]{\ifthenelse{\boolean{showcomments}}{\textcolor{magenta}{[\textit{MP: #1}]}}{}}
\title{How to Relax Instantly: Elastic Relaxation of Concurrent Data Structures}
\titlerunning{Elastic Relaxation of Concurrent Data Structures}
\author{
Kåre von Geijer\inst{1}\orcidID{0009-0007-4823-6855} \and
Philippas Tsigas\inst{1}\orcidID{0000-0001-9635-9154}}
\authorrunning{K. von Geijer and P. Tsigas}
\institute{Chalmers University of Technology, Gothenburg, Sweden}
\begin{document}

\maketitle

\begin{abstract}

The sequential semantics of many concurrent data structures, such as stacks and queues, inevitably lead to memory contention in parallel environments, thus limiting scalability. Semantic relaxation has the potential to address this issue, increasing the parallelism at the expense of weakened semantics. Although prior research has shown that improved performance can be attained by relaxing concurrent data structure semantics, there is no one-size-fits-all relaxation that adequately addresses the varying needs of dynamic executions.

In this paper, we first introduce the concept of \textit{elastic relaxation} and consequently present the \lateral structure, which is an algorithmic component capable of supporting the design of elastically relaxed concurrent data structures. Using the \lateral, we design novel elastically relaxed, lock-free queues and stacks capable of reconfiguring relaxation during run time. We establish linearizability and define upper bounds for relaxation errors in our designs. Experimental evaluations show that our elastic designs hold up against state-of-the-art statically relaxed designs, while also swiftly managing trade-offs between relaxation and operational latency. We also outline how to use the \lateral to design elastically relaxed lock-free counters and deques.

\keywords{concurrent data structures \and lock-free \and relaxed semantics}

\end{abstract}

    \section{Introduction}\label{sec:introduction}

As hardware parallelism advances with the development of multicore and multiprocessor systems, developers face the challenge of designing data structures that efficiently utilize these resources. Numerous concurrent data structures exist\,\cite{the-art-of-mpp}, but theoretical results such as \cite{inherent-seq-of-concurrent-obj} demonstrate that many common data structures, such as queues, have inherent scalability limitations as threads must contend for a few access points. 
One of the most promising solutions to tackle this scalability issue is to relax the sequential specification of data structures\,\cite{data-structures-in-multicore-age}, which permits designs that increase the number of memory access points, at the expense of weakened sequential semantics.

The $k$ \textit{out-of-order} relaxation formalized in \cite{quantitative-relaxation} is a popular model\,\cite{2D,distributed-queues,k-queue,klsm} that allows relaxed operations to deviate from the sequential order by up to $k$; for example, for the dequeue operation on a FIFO queue, any of the first $k+1$ items can be returned instead of just the head. This error, the distance from the head for a dequeue, is called the \textit{rank error}. \kvgcom{Can we remove this? We already have two sentences on access points above:} The advantage of relaxation is that multiple items can be inserted and removed at once, reducing contention at a few select access points. Furthermore, while other relaxations, such as quiescent consistency\,\cite{quiescent-consistency} are incompatible with linearizability\,\cite{linearizability}, k-out-of-order relaxation can easily be combined with linearizability, as it modifies the semantics of the data structure instead of the consistency. Despite extensive work on out-of-order relaxation\,\cite{distributed-queues,quantitative-relaxation,2D,family-relaxed-concurrent-queues,multiqueues-new,multiqueues-can-be-priority-scheduler,klsm}, almost all existing methods are static, requiring a fixed relaxation degree during the data structures' lifetime.

In applications with dynamic workloads, such as bursts of activity with throughput constraints, it is essential to be able to temporarily sacrifice sequential semantics for improved performance in times of high contention. This is the problem tackled in this paper, to specify and design relaxed data structures where the relaxation is reconfigurable during run-time, which we term \textit{elastic relaxation}. Elastically relaxed data structures enable the design of instance-optimizing systems, an area that is evolving extremely rapidly across various communities\,\cite{kraska2021towards}. The trade-off between rank error and throughput is highlighted in \cite{multiqueues-new} and \cite{multiqueues-can-be-priority-scheduler}, where their shortest-path benchmarks show that increased relaxation leads to higher throughput, but at the expense of additional required computation. 

Several relaxed data structures are implemented by splitting the original concurrent data structure into disjoint \textit{sub-structures}, and then using load-balancing algorithms to direct different operations to different sub-structures. In this paper, we base our elastic designs on the relaxed 2D framework presented in \cite{2D}, which has excellent scaling with both threads and relaxation, as well as provable rank error bounds. The key idea of the 2D framework is to superimpose a window (\Win) over the sub-structures, as seen in green in Figure \ref{fig:2d-queue-static} for the 2D queue, where operations inside the window can linearize out of order. The \Win{tail} shifts upward by \depth when it is full, and \Win{head} shifts upward when emptied, to allow further operations. The size of the window dictates the rank error, as a larger one allows for more reorderings.

\begin{figure}
    \centering
    \begin{minipage}[t]{0.4\textwidth}
        \centering
        \includegraphics[height=120px]{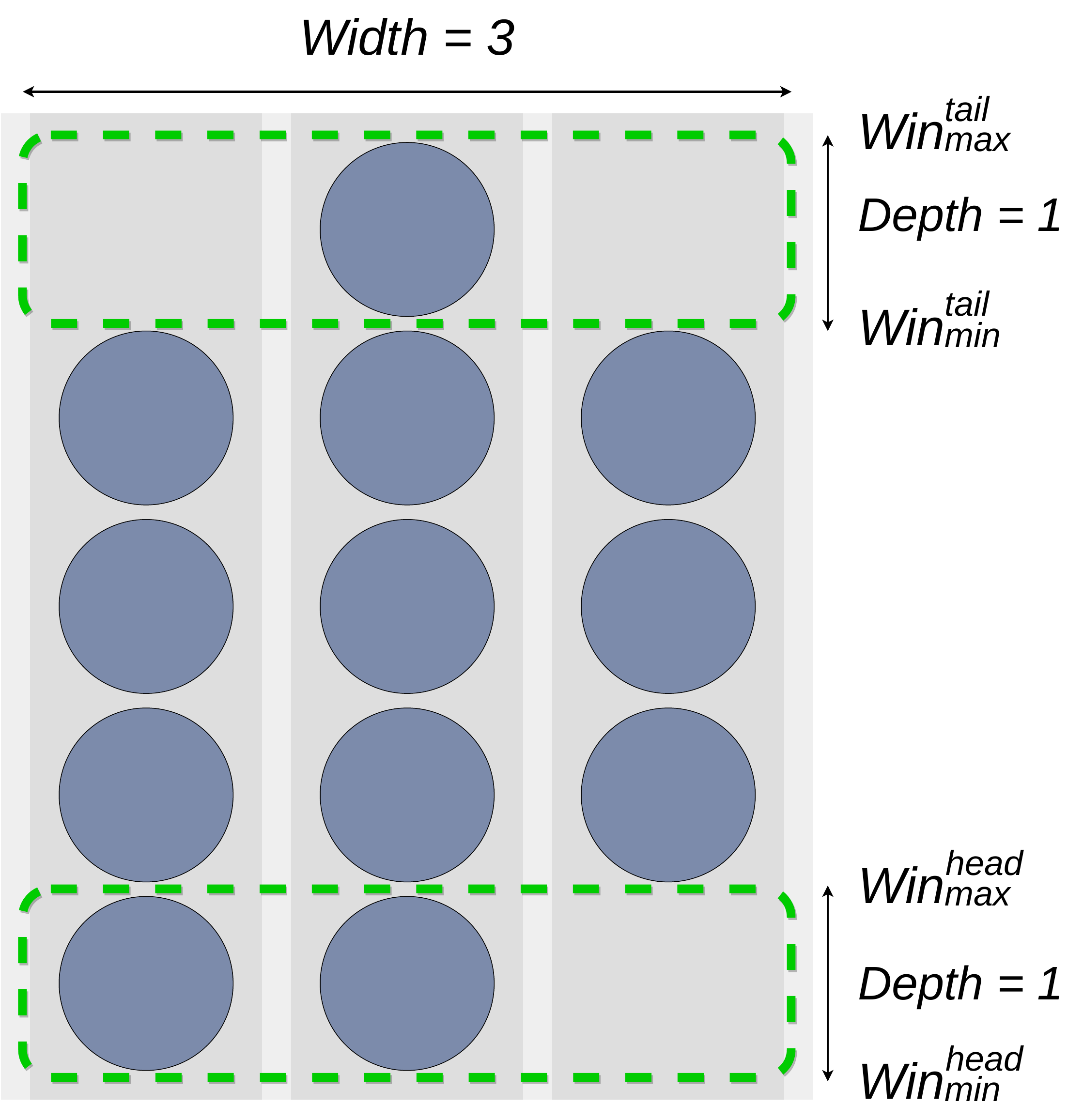}
        \caption{The 2D queue has two windows defining the operable area for the enqueue and dequeue operations.}
        \label{fig:2d-queue-static}
    \end{minipage}
    \hfill
    \begin{minipage}[t]{0.55\textwidth}
        \centering
        \includegraphics[height=120px]{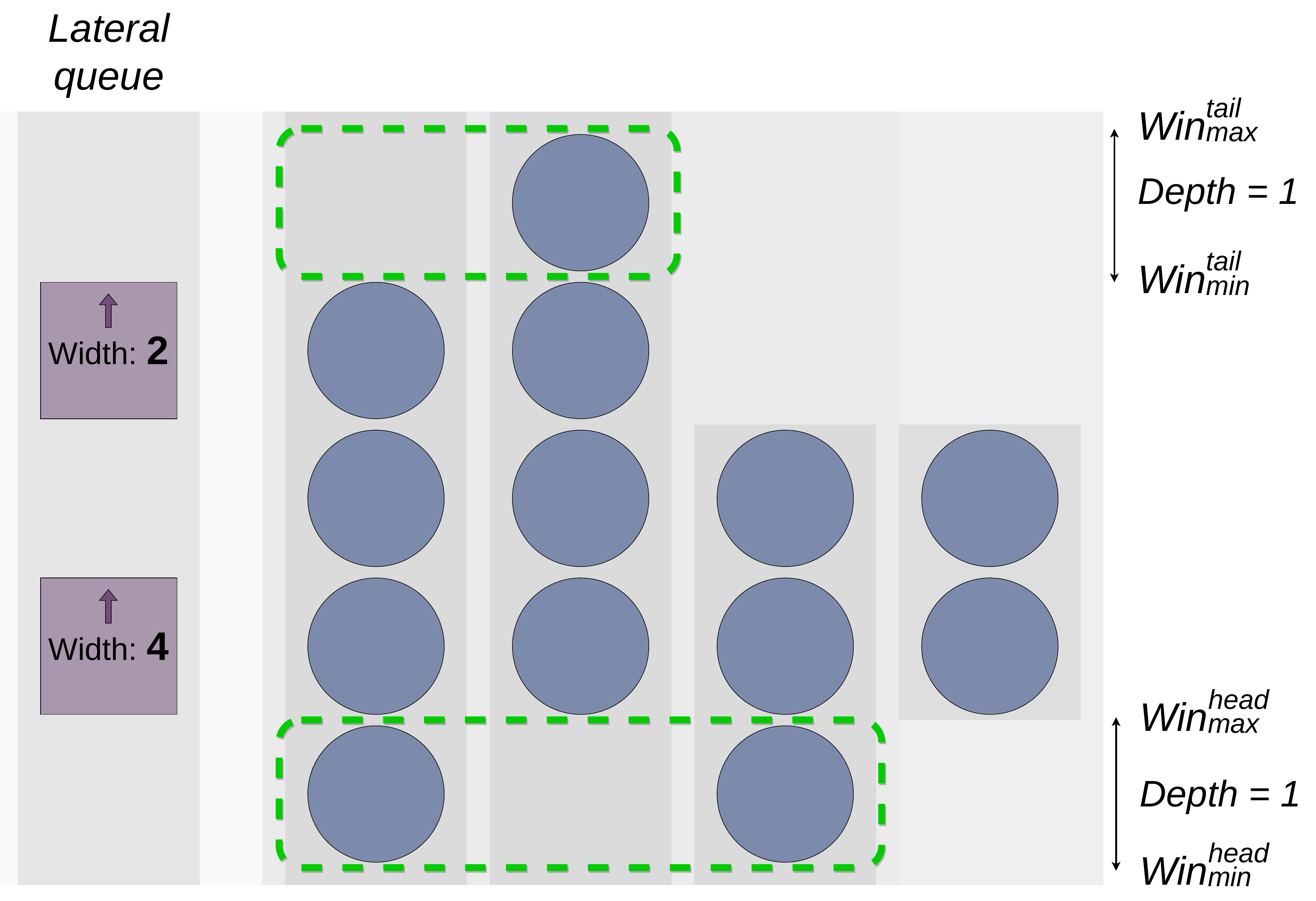}
        \caption{By adding a \lateral to the 2D queue, changes in width at \Win{tail} can be tracked and adjusted to by \Win{head}.}
        \label{fig:2d-queue-lateral}
    \end{minipage}
\end{figure}

The algorithmic design concept we propose in this paper is the \lateral structure that can extend the 2D structures to encompass elastic relaxation. The \lateral is added as a side structure next to the set of sub-structures in order to keep track of the elastic changes, as shown in Figure \ref{fig:2d-queue-lateral}. We show how to incorporate the \lateral into the window mechanism that the 2D framework introduced while achieving a deterministic rank error bound. 
Although we chose to use the 2D framework as a base for our designs, the \lateral can also accommodate other designs, such as the distributed queues from \cite{distributed-queues}, the k-queue from \cite{k-queue}, and the k-stack from \cite{quantitative-relaxation}.

\vspace{-1.3em}
\subsubsection{Contributions.}
This work takes crucial steps toward designing reconfigurable relaxed concurrent data structures with deterministic error bounds, capable of adjusting relaxation levels during run-time.
\begin{itemize}

   \item Firstly, we introduce the concept of \textit{elastic relaxation}, where the rank error bound can change over time. In addition, we introduce a generic algorithmic extension, the \lateral, to efficiently extend many relaxed data structure designs to support elasticity.
    \item We design and implement two elastically relaxed queues and a stack based on the \lateral, as well as outline how to incorporate it into other data structures such as counters and deques.
    \item We establish rank error bounds for our elastic designs, as well as prove their correctness.
    \item We present an extensive evaluation of our proposed data structures, benchmarking against non-relaxed data structures and statically relaxed ones. These evaluations show that the elastic designs significantly outscale non-relaxed data structures and perform as well as the statically relaxed ones, while also supporting elastic relaxation.
    \item Finally, we showcase the elastic capabilities of our design by implementing a simple controller for a producer consumer application. By dynamically adjusting the relaxation, it is able to control the producer latency during bursts of activity with minimal overhead.
\end{itemize}

\vspace{-1.5em}
\subsubsection{Structure.}
Section \ref{sec:related work} introduces related work, as well as gives a short description of the structures on which we base our elastic designs. Section \ref{sec:algorithms} introduces elastic relaxation and our novel data structures, which we then prove correct and provide worst-case bounds for in Section \ref{sec:correctness}. Section \ref{sec:evaluation} experimentally evaluates the new algorithms, both comparing them to earlier non-elastic data structures as well as testing their elastic capabilities. Section \ref{sec:conclusion} contains a few closing remarks.

    \section{Related Work}\label{sec:related work}

One of the earliest uses of relaxed data structures is from 1993 by Karp and Zhang in \cite{karp-zhang-original} where they used process-local work queues, only occasionally synchronizing to redistribute items randomly. 
These queues functioned as heuristics in their branch-and-bound algorithm to dictate the order of exploration in the problem space — which to this day is a prevalent application of relaxation as it does not impede correctness\,\cite{multiqueues-can-be-priority-scheduler,lightweight-graph-analytics,distributed-queues}. Although their paper used the idea of relaxation to improve performance, the concept of semantic relaxation was not defined, introduced, or properly explored.

More recently, relaxed data structures have emerged as a promising technique to boost concurrency\,\cite{data-structures-in-multicore-age} and are formally analyzed in for example \cite{quantitative-relaxation,distributional-linearizability,upper-and-lower-bounds,power-of-choice}. They have demonstrated exceptional throughput on highly parallel benchmarks \cite{2D,multiqueues-new,scal,distributed-queues}, and have shown to be suitable in heuristics for e.g. graph algorithms\,\cite{lightweight-graph-analytics,multiqueues-can-be-priority-scheduler}.

Henzinger et al. specified \textit{quantitative relaxation} in \cite{quantitative-relaxation} as a framework for defining relaxed data structures with a hard bounded error. These relaxations, such as $k$ out-of-order are easily extendably to elastic relaxation by letting the bound change during run-time. The paper also introduces the relaxed $k$-segment stack, which in turn builds upon the earlier relaxed FIFO queue from \cite{k-queue}. The size of these segments ($k$) is similar to the \width dimension in the 2D framework\,\cite{2D}, and it can therefore be extended to be elastic with our \lateral.

Another direction of relaxed data structures forgoes the hard bound from \cite{quantitative-relaxation} and instead relies on randomness to only give probabilistic guarantees, and was formalized in \cite{distributional-linearizability} as an extension to \cite{quantitative-relaxation}. A successful design of these probabilistically relaxed data structures is the MultiQueue\,\cite{multiqueues-short,multiqueues-new} which builds upon The Choice of Two\,\cite{power-of-two}. It is a relaxed priority queue which enqueues items into random sub-queues, while dequeuing the highest priority item from a random selection of two sub-queues. This can similarly be applied to FIFO queues such as the d-RA in \cite{distributed-queues} or stacks. While not providing upper bounds on relaxation, the relaxation of the MultiQueue has been extensively analysed in \cite{power-of-choice}.

The SprayList from \cite{spraylist} is another relaxed priority queue, but does not use multiple sub-queues. Instead, it builds upon a concurrent skip-list and does a random walk over items at each delete-min, returning one of the top $\mathcal{O}(p\log^3p)$ items with high probability, where $p$ is the number of threads. The properties of this random walk can be adjusted at run-time, essentially making it \textit{elastically relaxed}, although with a probabilistic rank error guarantee instead of a bound. However, experiments from \cite{multiqueues-new,multiqueues-short,multiqueues-can-be-priority-scheduler} suggest that the SprayList is outperformed by the MultiQueue, for which there is no known efficient elastic design.

Many concurrent queues aim for high throughput, yet the RCQs, detailed in \cite{family-relaxed-concurrent-queues}, targets reduced wait-time by leveraging the LCRQ framework\,\cite{lcrq}. It achieves similar throughput to fetch-and-add methods but with significantly lower wait-times, especially in oversubscribed, nearly empty queues. While offering a new perspective on concurrent queue design, the RCQ lacks a mechanism to adjust its relaxation feature.

\subsubsection{2D Framework}

The 2D framework for $k$ out-of-order relaxed data structures from \cite{2D} (hereby called the \textit{static} 2D designs) outscales other implementations from the literature such as \cite{k-queue,quantitative-relaxation,distributed-queues} with threads, while unifying designs across stacks, queues, deques, and counters. Furthermore, its throughput scales monotonically with $k$. It achieves this, as shown in Figure \ref{fig:2d-queue-static}, by superimposing a \textit{window} (\Win) over multiple disjoint concurrent (nonrelaxed) sub-structures which defines which of them are valid to operate on. 


The 2D \textit{window} has a \width (always the number of sub-structures in the static designs) and a \depth, which together govern the relaxation profile. At any point, it is valid to insert an item on a \textit{row} $r$ where $r \leq \Win{}{max}$, or delete an item on row $r$ where $r > \Win{}{min} \equiv \Win{}{max} - \depth$, while keeping linearization order in the sub-structures. To maximize data locality, each thread tries to do as many operations as possible on the same sub-structure in a row, only moving due to contention or from reaching \Win{}{max} or \Win{}{min}.

If an operation cannot find a valid sub-structure, it will try to \textit{shift} the window. For example, if a thread tries to insert an item into the 2D queue and sees all sub-queues at \Win{tail}{max}, it will try to shift the window up by atomically incrementing \Win{tail}{max} (and implicitly \Win{tail}{min}) by $\depth$, after which it restarts the search for a valid sub-queue.

The 2D stack is similar to the 2D queue, but as both insert (push) and delete (pop) operate on the same side of the data structure, only one window is used. This leads to \Win{}{max} no longer increasing monotonically, but instead decreasing under heavy delete workloads, and increasing under insert heavy ones. Furthermore, \Win{}{max} shifts by $\depth/2$ instead of $\depth$, which approximately makes it fair for future push and pop operations. The framework also covers deques and counters, using the same idea of a window to define valid operations.

    \section{Design of Elastic Algorithms}\label{sec:algorithms}

Static $k$ out-of-order relaxation is formalized in \cite{quantitative-relaxation} by defining and bounding a \textit{transition cost} (rank error) of the “get” methods within the linearized concurrent history. Elastic relaxation allows the relaxation configuration to change over time, which will naturally change the bound $k$ as well. Therefore, we define \textit{elastic out-of-order} data structures as static out-of-order, but allow the rank error bound to be a function of the relaxation configuration history during the lifetime of the accessed item. In the simplest case, such as the elastic queue from Section \ref{sec:law-queue}, the rank error bound for every dequeued item is a function of the \width and \depth during which the item was enqueued (which is the same as when it is dequeued).

To elastically extend the 2D algorithms, we want the \textit{width} and \textit{depth} of the 2D windows be adjustable during run-time, which would enable fine-grained control of the relaxation profile. Our designs let these parameters change every window shift, and then update them atomically with \Win{max}. Changing \depth is practically simple by including it as a window variable \Win{}{depth} (although it has implications on the error bounds). Changing the \width efficiently requires more attention. We create a maximum number of sub-structures at initialization time (each taking up only a couple bytes of memory), and then use the \lateral structure to keep track of which sub-structures have been inserted into for which rows as seen in Figure \ref{fig:2d-queue-lateral}.

\begin{definition}
    A \lateral to a relaxed data structure is a set of nonoverlapping adjacent ranges of rows, where each range has a corresponding width bound.
\end{definition}

Furthermore, we call the \lateral \textit{consistent} if the \textit{width bound} of each node properly bounds the width of the corresponding rows in the main structure. The exact implementation of the \lateral will vary depending on what performance properties are desired, but the overarching challenge is to keep it consistent while also being fast to read and update, as well as promising good rank error bounds. For our 2D designs, we found that it is essential to only update the \lateral (at most) once every window. 

\subsection{Elastic Lateral-as-Window 2D Queue} \label{sec:law-queue}

This first elastic Lateral-as-Window queue (LaW queue) merges the window into the \lateral, and can be applied to most data strucutures from the 2D framework\,\cite{2D} with small changes. The pseudocode is shown in Algorithm \ref{alg:law-queue}. First, we add a \lateral queue that is implemented as a Michael-Scott queue\,\cite{ms-queue}, for which the code omits the standard failable \textit{Enqueue} and \textit{Dequeue} methods. The \lateral nodes are windows, where each window contains \Win{}{max}, \Win{}{depth}, and \Win{}{width}. The \Win{tail} and \Win{head} then become the head and tail nodes in the \lateral. Every shift of \Win{tail} enqueues a new window in the \lateral (line \ref{alg:law-queue-pshift}), and every shift of \Win{head} dequeues a window (line \ref{alg:law-queue-gshift}).

\begin{algorithm}[!ht]
\SetKwProg{Func}{function}{}{}
\SetKwProg{Method}{method}{}{}
\SetKwProg{Struct}{struct}{}{}
\SetKwComment{NormComment}{// }{}

\caption{Pseudocode for the \lateral in the elastic LaW queue}\label{alg:law-queue}
\scriptsize

\begin{multicols}{2}

\Struct{Window}{
    Window* next\; 
    uint max\;
    uint depth\;
    uint width\;
}

global \Struct{Lateral}{
    Window* head\;
    Window* tail\;
}

\NormComment{Try to atomically enqueue \textit{new} directly after \textit{expected}}
\textbf{method} \textit{Lateral.Enqueue(expected, new)}\;

\NormComment{Try to atomically dequeue \textit{expected} if it is the head}
\textbf{method} \textit{Lateral.Dequeue(expected)}\;

\Method{Lateral.ShiftTail(old\_window) \label{alg:law-queue-pshift}}{
    depth $\gets$ depth$_{\text{shared}}$\;
    \DontPrintSemicolon
    new\_window $\gets$ \{ \;
      \hspace{1em}width: width$_{\text{shared}}$, \; \label{alg:law-queue-width}
      \hspace{1em}depth: depth, \;
      \hspace{1em}max: old\_window.max + depth\;
    \PrintSemicolon
    \}\;
    Lateral.Enqueue(old\_window, new\_window)\;
}

\Func{ShiftHead(current\_head) \label{alg:law-queue-gshift}}{
    Lateral.Dequeue(current\_head)\;
}

\end{multicols}
\vspace{1em}
\end{algorithm}

As shown in \textit{ShiftTail} (line \ref{alg:law-queue-pshift}), \Win{}{depth} and \Win{}{width} can be updated from global variables every shift, which enables the elasticity. The main drawback of this design is that the relaxation only changes at the tail, and must propagate through the queue to reach the head. The \lateral also takes up more memory than two simple global windows, and require a bit more overhead to update than shifting the windows with a simple compare-and-swap.

Other than the pseudocode in Algorithm \ref{alg:law-queue}, the remaining logic from the static 2D queue require only small adjustments. Mainly, items must always be inserted within the window, so each items gets enqueued at $\textit{max}(\Win{put}{max} - \Win{put}{depth}, \textit{last\_item.row}) + 1$, which create needed gaps in sub-queues as seen in Figure \ref{fig:2d-queue-lateral}. Furthermore, \Win{head} cannot pass \Win{tail}, and dequeues can simply return empty if the \lateral is empty.
\subsection{Elastic Lateral-plus-Window 2D Queue}

The elastic Lateral-plus-Window 2D queue (elastic LpW queue) solves two shortcomings of the previous elastic LaW queue. Firstly, it allows the head to change relaxation independently of the tail by letting both windows change \Win{}{depth} at window shifts. Second, it does not have to allocate a new \lateral node every \Win{tail} shift, and instead only creates \lateral nodes when \Win{tail}{width} changes. However, it comes at the expense of having to decouple the window and \lateral components, and uses a 128-bit shared atomic struct for each window.

The pseudocode for the \lateral and windows is presented in Algorithm \ref{alg:window-plus-lateral-queue} and shows that the \Win{head} and \Win{tail} structs now are global variables, both containing \textit{max}, \textit{depth}, and \textit{width}. The \lateral is again implemented as a Michael-Scott queue\,\cite{ms-queue} where we omit the \textit{Enqueue} and \textit{Dequeue} implementations.

\begin{algorithm}[!ht]
\SetKwProg{Func}{function}{}{}
\SetKwProg{Method}{method}{}{}
\SetKwProg{Struct}{struct}{}{}
\SetKwComment{NormComment}{// }{}

\caption{Lateral and window code for the elastic LpW queue}\label{alg:window-plus-lateral-queue}
\scriptsize

\begin{multicols}{2}

global \Struct{TailWindow}{
    uint64 max\;
    uint16 depth\;
    uint16 width\;
    uint16 next\_width\;
}

global \Struct{HeadWindow}{
    uint64 max\;
    uint16 depth\;
    uint16 width\;
}

\Struct{LateralNode}{
    LateralNode* next\;
    uint row\;
    uint width\;
}

global \Struct{Lateral}{
    LateralNode* head\;
    LateralNode* tail\;
}

\NormComment{Try to atomically enqueue \textit{new} directly after \textit{expected}}
\textbf{method} \textit{Lateral.Enqueue(expected, new)}\;

\NormComment{Try to atomically dequeue \textit{expected} if it is the head}
\textbf{method} \textit{Lateral.Dequeue(expected)}\;

\Method{Lateral.SyncTail(window) \label{alg:lpw-queue-sync-tail}}{
    tail $\gets$ Lateral.tail\;
    \If{tail.row $\leq$ window.max \label{alg:lpw-queue-lat-check}}{
        \DontPrintSemicolon
        new\_tail $\gets$ \{ \;
          \hspace{1em}row: window.max + 1, \; \label{alg:lpw-queue-lat-max}
          \hspace{1em}width: window.next\_width \;
        \PrintSemicolon
        \}\;
        Lateral.Enqueue(tail, new\_tail)\;
    }
}

\Func{ShiftTail(old\_window) \label{alg:lpw-queue-shift-tail}}{
    \textbf{if} window.width $\neq$ window.next\_width Lateral.SyncTail(old\_window) \label{alg:lpw-diff-width}\;

    depth $\gets$ depth$_\text{shared}$\;
    \DontPrintSemicolon
    new\_window $\gets$ \{ \;
      \hspace{1em}width: old\_window.next\_width, \;
      \hspace{1em}next\_width: width$_\text{shared}$, \; \label{alg:lpw-queue-width-shared}
      \hspace{1em}depth: depth, \;
      \hspace{1em}max: old\_window.max + depth \;
    \PrintSemicolon
    \}\;

    CAS(\&TailWindow, old\_window, new\_window)\;
}

\Func{ShiftHead(old\_window) \label{alg:lpw-queue-shift-head}}{

    \While{true} {
        head $\gets$ Lateral.head()\;
        \textbf{if} head.max $>$ old\_window.max \textbf{break}\; 
        Lateral.Dequeue(head)\;
    }

    \DontPrintSemicolon
    new\_window $\gets$ \{ \;
      \hspace{1em}width: old\_window.width, \;
      \hspace{1em}max: max(TailWindow.max, \;\hspace{4em}old\_window.max + depth$_\text{shared}$) \label{alg:lpw-head-overtake}\;
    \PrintSemicolon
    \}\;

    \If{head.row = old\_window.max + 1 \label{alg:lpw-overlapping-lateral}}{
        new\_window.width $\gets$ head.width\;  \label{alg:lpw-new-width}
        head $\gets$ head.next\;
    } 
    \If {head.row $<$ new\_window.max \label{alg:lpw-lateral-limit-max}} {
        new\_window.max $\gets$ head.row - 1\;  
    }
    new\_window.depth $\gets$ new\_window.max - old\_window.max\;

    CAS(\&TailWindow, old\_window, new\_window)\;
}

\end{multicols}
\vspace{1em}
\end{algorithm}

Shifting \Win{tail} is done during a pending enqueue call when the window has become full (line \ref{alg:lpw-queue-shift-tail}). It reads the desired \textit{depth} and \textit{width} from shared variables, and use them in the next window. However, the \textit{width} is not used immediately, but instead written to a \textit{next\_width} field in the future window, which is then used in the successive shift as the new \textit{width}. This delay is introduced to ensure that a \lateral node will be enqueued with the new \textit{width} before this \textit{width} is used in an enqueue. Enqueueing the \lateral node is done before the window shift when $\textit{next\_width} \neq \textit{width}$ (line \ref{alg:lpw-diff-width}), ensuring that the head of the queue will be aware of changes in width before they occur.

Similarly, \Win{head} is shifted during a dequeue call when all sub-queues in the \textit{width} has reached the window \textit{max}. The shift starts by dequeueing all \lateral nodes below the current \Win{head}{max}, as they represent stale changes in width. The shift optimistically reads a shared \textit{depth} variable which is used for the next window (line \ref{alg:lpw-head-overtake}), unless the \lateral head node overlaps the new range (line \ref{alg:lpw-lateral-limit-max}). If the \lateral head is on the bottom row of the new window (line \ref{alg:lpw-new-width}), then the new window will adapt to the width change encoded by the head. Otherwise, the new \Win{head}{max} is limited to not overlap a \lateral node (line \ref{alg:lpw-lateral-limit-max}). This ensures that all nodes within a window were pushed within the same \textit{width}, which is used at dequeues to calculate which row the dequeued node was at.

At the cost of separating the \lateral and the window, this LpW queue is able to change \textit{depth} independently for the head and the tail. However, the \textit{width} is still only ever changed at \Win{tail} which \Win{head} has to adapt to by using the \lateral. We have designed this to be efficient on modern x86-64 machines where CAS only has hardware support for up to 128 bits, which then becomes the upper size limits for our window structs. One can allocate the sizes differently depending on the need of the application, but if 128 bits is not enough, or a machine without 128-bit CAS support is used, the elastic LaW-queue might be more suitable.

\subsection{Elastic Lateral-plus-Window 2D Stack}

This elastic Lateral-plus-Window 2D stack (elastic LpW stack) uses the same methods as the elastic LpW queue to make the static 2D stack elastic. The pseudocode for the \lateral and window is found in Algorithm \ref{alg:lpw-stack} and uses a global shared window struct, which is updated with CAS at window shifts (line \ref{alg:lpw-stack-shift-cas}), and a Treiber stack\,\cite{treiber-stack} as a \lateral for all changes in \textit{width}. The nonmonotonic nature of the window means that the width bound of a row can change many times, requiring the \lateral to be stabilized before every window shift (line \ref{alg:lpw-stack-stabilize}).

\begin{algorithm}[!ht]
\SetKwProg{Func}{function}{}{}
\SetKwProg{Method}{method}{}{}
\SetKwProg{Struct}{struct}{}{}
\SetKwComment{Comment}{$\triangleright$\,}{}
\SetKwComment{NormComment}{//\,}{}

\caption{Lateral and window code for the elastic LpW stack}\label{alg:lpw-stack}
\fontsize{6pt}{7.5pt}\selectfont 

\begin{multicols}{2}

\NormComment{Example size of fields to fit 128 bits}
global \Struct{Window}{
    uint32 max\;
    uint16 depth\;
    uint16 push\_width\;
    uint16 pop\_width\;
    uint16 last\_push\_width\;
    enum last\_shift\Comment*[r]{UP or DOWN\hspace{2em}}
    uint31 version\;
}

\Method{Window.Min(self)}
{
    \Return self.max - self.depth\;
}

\Func{Shift(dir, old\_window) \label{alg:lpw-stack-shift}}{
    Lateral.Stabilize(old\_window)\; \label{alg:lpw-stack-stabilize}

    \DontPrintSemicolon
    new\_window $\gets$ \{ \;
      \hspace{1em}push\_width: width$_\text{shared}$,\; \label{alg:lpw-stack-push-width}
      \hspace{1em}last\_push\_width: old\_window.push\_width,\;
      \hspace{1em}depth: depth$_\text{shared}$,\;
      \hspace{1em}last\_shift: dir,\;
      \hspace{1em}version: old\_window.version + 1\;
    \PrintSemicolon
    \}\;

    \eIf{dir $=$ UP}
    {
        new\_window.max $\gets$ old\_window.max + new\_window.depth$/2$\; \label{alg:lpw-stack-shift-up-row}
    }
    {
        new\_window.max $\gets$ old\_window.Min() + new\_window.depth$/2$\;\label{alg:lpw-stack-shift-down-row}
    }

    new\_window.pop\_width $\gets$ max(new\_window.push\_width, Lateral.width(new\_window.Min()))\; \label{alg:lpw-stack-pop-width}

    CAS(\&Window, old\_window, new\_window)\; \label{alg:lpw-stack-shift-cas}
}

global \Struct{Lateral}{
    LateralNode* top\;
    uint64 version\;
}

\Struct{LateralNode}{
    LateralNode* next\;
    uint row\;
    uint width\;
}

\Method{Lateral.Width(self, row)}
{
    \Return max(node.width \textbf{for} node \textbf{in} self \textbf{where} node.row $>$ row)\; \label{alg:pop_width}
}

\Method{Lateral.Stabilize(self, win) \label{alg:sync-lat-decl}}{
    read\_lat $\gets$ new\_lat $\gets$ self.read()\;
    \If{Window $\neq$ win \textbf{or} read\_lat.version $=$ win.version}
    {
        \Return\Comment*[r]{Already completed\hspace{0em}}
    }
    \NormComment{Phase 1: Update local \lateral} \label{alg:lpw-stack-lat-phase1}
    new\_lat.top $\gets$ read\_lat.top.Update(win)\; 

    \NormComment{Phase 2: Push new top if changed width} \label{alg:lpw-stack-lat-phase2}
    upper\_bound $\gets$ max(win.max, stack.row \textbf{for} stack \textbf{in} substacks[win.push\_width..win.last\_push\_width])\; \label{alg:lat-upper-bound-calc}
    \uIf{win.push\_width $>$ win.last\_push\_width and new\_lat.top.row $<$ win.Min()}
    {
        \DontPrintSemicolon
        node' $\gets$ \{ \Comment*[r]{New top node\hspace{0em}} \label{alg:lat-smaller-push-row}
            \hspace{1em}row: win.Min(),\;
            \hspace{1em}width: win.last\_push\_width,\;
            \hspace{1em}next: new\_lat.top\;
        \PrintSemicolon
        \}\;
        new\_lat.top $\gets$ node'\Comment*[r]{Push width change\hspace{0em}}
    }
    \uElseIf{win.push\_width $<$ win.last\_push\_width and new\_lat.top.row $<$ upper\_bound}
    {
        \DontPrintSemicolon
        node' $\gets$ \{ \Comment*[r]{New top node\hspace{0em}} \label{alg:lat-upper-bound}
            \hspace{1em}row: upper\_bound,\;
            \hspace{1em}width: win.last\_push\_width,\;
            \hspace{1em}next: new\_lat.top\;
        \PrintSemicolon
        \}\;
        new\_lat.top $\gets$ node'\Comment*[r]{Push width change\hspace{0em}}
    }

    \If{new\_lat $\neq$ read\_lat} {
        new\_lat.version $\gets$ win.version\;
        CAS(\&Lateral, read\_lat, new\_lat)\; \label{alg:cas-lateral-stack}
    }
}

\Method{LateralNode.Update(self, win)}
{
    \If{self.row $\leq$ win.Min() \label{alg:lateral-update-base}}
    {
        \Return self\;
    }
    node' $\gets$ node.Clone()\;
    
    \uIf{node.width $\leq$ win.push\_width}
    {
        node'.row $\gets$ win.Min()\; \label{alg:lat-smaller-width-update}
    }
    \ElseIf{node.width $>$ win.last\_push\_width and win.last\_shift $=$ DOWN \label{alg:lat-wider-if}}
    {
        last\_min $\gets$ win.max $-$ win.depth$/2$\;
        node'.row $\gets$ min(node.row, last\_min)\; \label{alg:lat-larger-width-update}
    }
    next $\gets$ node'.next.Update(win)\;
    \eIf{node'.row $\leq$ node.next.row} {
        \Return next\Comment*[r]{Remove node\hspace{0em}}\label{alg:lateral-stack-remove}
    }
    {
        node'.next $\gets$ next\;
        \Return node'\;
    }
}

\end{multicols}
\vspace{1em}
\end{algorithm}

The push and pop operations are very similar to the static 2D stack. The \textit{push} operations use the \textit{push\_width}, which is the desired global width (line \ref{alg:lpw-stack-push-width}), and the \textit{pop} operations now use the \textit{pop\_width}, which is the upper width bound on the rows in the window based on the \lateral (line \ref{alg:lpw-stack-pop-width}). Not shown in the pseudocode, the nodes in the sub-stacks also store the row they are pushed at, which is used at pops to update the descriptor for each sub-stack.

The window shift (line \ref{alg:lpw-stack-shift}) updates these widths, as well as shifting the \Win{}{max} or \Win{}{min} by the new $\Win{}{depth}/2$ (lines \ref{alg:lpw-stack-shift-up-row} and \ref{alg:lpw-stack-shift-down-row}). In addition, it stores the last \textit{push\_width} and the direction of the shift (UP or DOWN) in the window, which are used to keep the \lateral consisent for the next window shift. It linearizes with a 16 byte CAS at line \ref{alg:lpw-stack-shift-cas}.

The core of this data structure is the \lateral, and how it is kept consistent for every shift (line \ref{alg:lpw-stack-stabilize}). This function maintains the stack invariant that for any \lateral node $l$, all rows in the 2D stack between $l.row$ and $l.\textit{next}.\textit{row}$ will have smaller or equal width as $l.\width$. We divide this synchronization into two phases, which together create a local top candidate for the \lateral stack, and linearize with a CAS at line \ref{alg:cas-lateral-stack}.

The first phase clones and tries to \textit{lower} \lateral nodes above \Win{}{min} (line \ref{alg:lpw-stack-lat-phase1}). By lowering a \lateral node $l$ to row $r$, we set $l.\row \gets min(l.\row, r)$, and then if $l.\row \leq l.\textit{next}.\row$, $l$ is removed from the stack (by linking its parent $l', l'.\textit{next} \gets l.\textit{next}$). If a \lateral node $l, l.\row > \Win{}{min}$ has $l.\width \leq \Win{}{push\_width}$, then it is lowered to \Win{}{min} (line \ref{alg:lat-smaller-width-update}), as new nodes can have been pushed outside $l.\width$ within the window, invalidating its bound. Otherwise, if $l.\width > \Win{}{last\_push\_width}$ and the last shift was downwards, all sub-stacks must have been seen at the previous \Win{}{min} before it shifted down, so $l$ is lowered to a conservative estimate of the previous $\Win{}{min}$ (line \ref{alg:lat-larger-width-update}). This estimate only deviates from the actlal previous \Win{}{min} when the last \Win{min} was smaller than $\Win{}{depth}/2$, in which case it overestimates, thus keeping a correct bound.

In the second phase (from line \ref{alg:lpw-stack-lat-phase2}), a new \lateral node with \textit{width} \Win{}{last\_push\_width} is pushed if the width has changed.\begin{itemize}
    \item If $\Win{}{push\_width} > \Win{}{last\_push\_width}$, the width has increased, and a \lateral node is pushed at \Win{}{min} to signify that the width from there on is smaller. This is not needed for correctness, but is helpful in limiting the \Win{}{pop\_width} if the width shrinks in the future.
    \item If $\Win{}{push\_width} < \Win{}{last\_push\_width}$, the width has decreased and a new \lateral node needs to be pushed at the highest row containing nodes between \Win{}{push\_width} and \Win{}{last\_push\_width}. This can not reliably be calculated from the present window variables, so we simply iterate over the sub-stacks (line \ref{alg:lat-upper-bound-calc}).
\end{itemize}

In summary, the elastic LpW stack uses a similar idea as the elastic LpW queue, but needs to do some extra work to maintain the \lateral invariant. However, unless the workload is very pop-heavy, the \lateral nodes should quickly stabilize and let the \textit{push\_width} and \textit{pop\_width} be equal.

\subsection{Elastic Extension Outlines: 2D Counter and Deque} \label{sec:elastic-counter} \label{sec:elastic-deque}

The 2D counter can easily be made elastic by adding a \lateral counter. However, as the counter is not a list-based data structure, and the "get" operation is more akin to reading its size, it becomes slightly different. The key is to let the \lateral track the difference between the the sum of all counters within \Win{}{width} and the total count. A simple way to implement this is to add a small delay before changing \Win{width}, as in the LpW queue, and iterate over the counters between the next and current \Win{width}, updating them and the \lateral to get a consistent offset.

To derive an elsatic 2D deque, we use a similar deque as \cite{2D}, but with only one window at each side of the deque, much like the 2D queue. However, as with the 2D stack, these windows can shift both up and down. 
This can be made elastic in a similar fashion to the LaW queue, where a \lateral deque is kept with the sequence of all windows. If the windows shift with \depth rows each time, as in the queue, a similar approach to the k-stack from \cite{quantitative-relaxation} can be used to make sure a window is not removed while non-empty. If the window should shift by $\depth/2$ as the stack, successive windows would overlap, requiring extra care. A solution to this could be to split each window in two, letting the threads operate on the two topmost windows, and still use the confirmation technique from \cite{quantitative-relaxation} for the top \Win when shifting downwards.

    \section{Correctness}\label{sec:correctness}

In this section, we prove the correctness and relaxation bounds for our elastic designs. For simplicity, we only relax non-empty remove operations and assume a double-collect\,\cite{double-collect} approach is used to get linearizable empty returns, as is also done in \cite{distributed-queues}. However, it is possible to extend the arguments to allow relaxed empty returns with the same rank error bound $k$. Furthermore, if no elastic changes are used, our elastic designs have the same error bounds as the static 2D structures: $\depth(\width-1)$ for the queue\,\cite{2D} and $2.5\depth(\width - 1)$ for the stack (see Appendix \ref{app:2Dc-stack-bound}).

Our designs are lock-free as (i) each sub-structure, the windows, and the \lateral are updated in a lock-free manner (by linearizing with a CAS), (ii) the \lateral is updated at most once every window, and (iii) that there cannot be an infinite number of window shifts without progress on any of the sub-structures, as was proved in \cite{2D}.

\subsection{Elastic LaW Queue}


\begin{theorem}\label{thm:law-queue-bound}
    The elastic LaW 2D queue is linearizeable with respect to a FIFO queue with elastic $k$ out-of-order relaxed dequeues, where $k = (\Win{head}{width} - 1)\Win{head}{depth}$.
\end{theorem}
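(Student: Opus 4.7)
The plan is to establish two claims: (i) the concurrent history admits a linearization that is a sequential FIFO history relaxed in the $k$-out-of-order sense, and (ii) each non-empty dequeue has rank error at most $(\Win{head}{width} - 1)\Win{head}{depth}$ relative to the \Win{head} window active at its linearization point. Both arguments hinge on a single structural invariant: every item enqueued inside a given \lateral node $W$ of width $w_W$ and depth $d_W$ lies at a row in $[W.\textit{max} - d_W + 1,\, W.\textit{max}]$ and is confined to the first $w_W$ sub-queues, with at most one item per (sub-queue, row) pair. This invariant follows from the enqueue rule $\max(\Win{put}{max} - \Win{put}{depth},\, \textit{last\_item.row}) + 1$ and the fact that each \lateral node fixes its \textit{width} and \textit{depth} at creation in \textit{ShiftTail} and is never mutated thereafter.

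For linearizability, I take the linearization point of an enqueue to be the successful CAS inserting the item into its sub-queue, and the linearization point of a non-empty dequeue to be the successful CAS removing its item; empty dequeues linearize inside their double-collect region as assumed. Each \Win{tail} shift and each \Win{head} shift linearizes at its Michael-Scott CAS on the \lateral. It then suffices to verify that in the resulting sequential history every non-empty dequeue returns an item that is still present, which follows from the facts that \Win{head} is dequeued from the \lateral only after its window is empty, that an enqueue always lands inside the \lateral node that is \Win{tail} at its linearization point, and that each sub-queue is itself a linearizable FIFO.

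For the rank error bound, fix a non-empty dequeue returning $x$ from sub-queue $i$, let $W$ be the \lateral node that is \Win{head} at the linearization point, and set $w = W.\textit{width}$, $d = W.\textit{depth}$. Any item contributing to the rank error of $x$ must have been enqueued strictly before $x$ and still be present. Because \Win{head} only advances when its window is empty and never regresses, no item survives at a row at or below $W.\textit{max} - d$; items residing in \lateral nodes strictly after $W$ were enqueued after every item in $W$, hence after $x$. Within $W$, the remaining items of sub-queue $i$ lie behind $x$ in sub-queue FIFO order and were therefore enqueued after $x$, contributing nothing. The items that can contribute are thus confined to the other $w - 1$ sub-queues of $W$, and the structural invariant caps each such sub-queue at $d$ items in $W$, giving the bound $(w - 1) d$.

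The main obstacle I expect is justifying the structural invariant under concurrent enqueues and shifts — in particular, showing that an enqueue inserting into sub-queue $j$ at its chosen row always lands inside a \lateral node whose row range contains that row and whose \textit{width} strictly exceeds $j$, even though the observed \Win{tail} may have advanced between the read and the sub-queue CAS. Since each \lateral node is immutable after creation, and since \textit{ShiftTail} increments \textit{max} by exactly the outgoing window's \textit{depth}, the argument reduces to a case analysis on whether the successful sub-queue CAS precedes or follows each intervening \Win{tail} shift, together with the observation that a shift can only occur once the current window is full. Once this invariant is in hand, both the linearization check and the counting step above follow directly.
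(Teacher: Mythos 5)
Your argument is correct and follows essentially the same route as the paper's own proof: both rest on the observation that every window is completely filled (all \Win{head}{depth} rows across \Win{head}{width} sub-queues, one item per sub-queue per row) before \Win{tail} shifts, that items in distinct \lateral windows are therefore correctly ordered, and that a dequeue from \Win{head} can only be overtaken by items in the other $\Win{head}{width}-1$ sub-queues of that same window, each contributing at most \Win{head}{depth} items. Your explicit structural invariant and the concurrency case analysis you flag (a delayed enqueue racing with a \Win{tail} shift) make the argument more rigorous than the paper's two-paragraph sketch, but they formalize the same idea rather than constituting a different approach.
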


\begin{proof}
    The key observation is that every \Win{tail} must fill all \Win{head}{depth} rows of \Win{head}{width} items each before shifting to the next \Win. These items can be enqueued in any order, except that each sub-queue is totally ordered. Enqueues in different windows are on the other hand correctly ordered, due to the sequential semantics of the \lateral. The \Win{head} uses the same \Win structs as the \Win{tail}, by traversing the \lateral of past \Win{tail}, not shifting past such a window until it has also dequeued all its $\Win{}{width} \times \Win{}{depth}$ items.

    Thus, as the oldest items in the queue always are in \Win{head}, and dequeues only ever dequeue from the \Win{head}, together with the fact that the sub-queues are totally ordered, means that a dequeue can at most skip the first $(\Win{head}{width} - 1)\Win{head}{depth}$ items.
\end{proof}

Additionally, there is another measure of the relaxation error called the \textit{lateness} defined in \cite{quantitative-relaxation}, or \textit{delay} in \cite{multiqueues-new}. For a relaxed queue, the lateness is the maximal number of consecutive dequeue operations that can be carried out without dequeuing the head. Due to the monotonically increasing row counts in the queue, this gets the same bound as the out-of-order bound $k$ proved in Theorem \ref{thm:law-queue-bound}, and follows by essentially the same argument. This also holds for the static 2D queue.

\subsection{LpW Elastic 2D Queue}

This queues is not as trivial as the last one, as its \depth can change in both \Win{head} and \Win{tail} instantly, meaning that the \Win{head} will not necessarily follow the same sequence of row ranges as \Win{tail}. To simplify, we introduce an ordering of the windows with respect to their max, such that $\Win{i} < \Win{j}$ if $\Win{i}{max} < \Win{i}{max}$. We also denote the window during which item $x$ was enqueued or dequeued as \Win{enq x} or \Win{deq x} respectively. Finally, we denote the row of a node $x$ in a sub-queue or the \lateral queue as $\row{x}$.

\begin{lemma} \label{lemma:queue-passing}
    If a \lateral node $l$ is enqueued at time $t$, then $\Win{head}{max} < \row{l}$ at $t$.
\end{lemma}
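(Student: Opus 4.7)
The plan is to trace the single pathway through which lateral nodes are created in Algorithm~\ref{alg:window-plus-lateral-queue} and bound $\Win{head}{max}$ at the moment of the installing CAS. A lateral node $l$ can only be enqueued inside Lateral.SyncTail (line~\ref{alg:lpw-queue-sync-tail}), which is invoked only from ShiftTail at line~\ref{alg:lpw-diff-width}, and it is created with $\row{l} = w.\textit{max}+1$ (line~\ref{alg:lpw-queue-lat-max}), where $w$ is the old\_window snapshot passed into SyncTail. The lemma therefore reduces to showing $\Win{head}{max} \leq w.\textit{max}$ at time $t$.

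The first step is tail monotonicity: since $w$ was obtained from an earlier read of TailWindow and TailWindow.max strictly increases at every successful ShiftTail CAS, $\Win{tail}{max}(\tau) \geq w.\textit{max}$ holds for every $\tau$ after that read, including $t$. The second step is an auxiliary invariant that at any time, for every lateral node $l'$ currently in the queue, $\Win{head}{max} < \row{l'}$. This invariant is preserved by all events except a lateral enqueue: a ShiftHead CAS either caps the new $\Win{head}{max}$ at $\textit{head}.\textit{row}-1$ via line~\ref{alg:lpw-lateral-limit-max} or leaves every lateral node strictly above it, the dequeue loop at the top of ShiftHead removes only nodes with row not exceeding the shifting head's old max, and tail shifts do not alter $\Win{head}{max}$.

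The main obstacle is the lateral enqueue case of this invariant, which coincides with the lemma itself, so I would close it directly at time $t$ rather than by induction. The SyncTail check at line~\ref{alg:lpw-queue-lat-check} implies the current lateral tail has row $\leq w.\textit{max}$ immediately before $l$'s Enqueue CAS; combined with the second-step invariant, this already forces $\Win{head}{max} \leq w.\textit{max}$ whenever at least one lateral node exists above the lateral head. The residual case is when no lateral node constrains the head, which can only arise if an earlier ShiftHead set $\Win{head}{max} = \textit{old\_window}.\textit{max} + \textit{depth}_{\text{shared}}$ through line~\ref{alg:lpw-head-overtake} without encountering any lateral restriction. Here I plan to argue by linearization order that the ShiftHead CAS, the ShiftTail CAS which installed $w$'s successor in TailWindow, and SyncTail's Enqueue CAS cannot be ordered consistently with the SyncTail check succeeding unless $\Win{head}{max} \leq w.\textit{max}$, using the tail-monotonicity bound from the first step together with the strict row ordering of lateral nodes.
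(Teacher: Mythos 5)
There is a genuine gap, and it sits exactly where the lemma's content lies. Your argument splits into a case where the \lateral already contains a node (handled via your auxiliary invariant) and a ``residual case'' where it does not; but that residual case is the heart of the lemma, and you leave it as an unexecuted plan (``I plan to argue by linearization order\dots''). Worse, the one concrete fact you establish for it --- tail monotonicity, $\Win{tail}{max}(\tau) \geq w.\textit{max}$ after the read of $w$ --- points in the wrong direction: to bound $\Win{head}{max}$ from above you need an \emph{upper} bound on $\Win{tail}{max}$ at time $t$, namely that the tail window is still $w$ when the Enqueue CAS succeeds. That is the step the paper actually proves: the \lateral enqueue can only succeed while \textit{TailWindow} is still $w$, because any shift away from $w$ requires a completed \textit{SyncTail}$(w)$ first, which either installs a \lateral node of row $w.\textit{max}+1$ (so a late Enqueue against the old \lateral tail fails) or finds the check at line~\ref{alg:lpw-queue-lat-check} already violated by a higher-row node. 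Hence $\Win{tail}{max} = w.\textit{max} < \row{l}$ at $t$, and the separately maintained invariant $\Win{head}{max} \leq \Win{tail}{max}$ (enforced at line~\ref{alg:lpw-head-overtake}) closes the lemma uniformly, with no case split on the \lateral's contents. You never state or use this head-does-not-overtake-tail invariant, which is the paper's central fact.

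A secondary concern: your auxiliary invariant (``$\Win{head}{max} < \row{l'}$ for every \lateral node currently present'') has the lateral-enqueue event as its only nontrivial preservation case, and that case \emph{is} the lemma. You acknowledge this and propose to close it ``directly at time $t$,'' but the closure you sketch for a non-empty \lateral only bounds $\Win{head}{max}$ by the previous \lateral tail's row, and the empty-\lateral closure is missing entirely; until it is supplied, the induction does not get off the ground. If you add the two missing ingredients --- $\Win{head}{max} \leq \Win{tail}{max}$ and $\Win{tail}{max} = w.\textit{max}$ at $t$ --- your entire case analysis and auxiliary invariant become unnecessary.
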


\begin{proof}
We first note that $\row{l} \gets \Win{tail}{max} + 1$ when it is enqueued (line \ref{alg:lpw-queue-lat-max}). Furthermore, the enqueue of $l$ completes before the window shifts, and it cannot be enqueued again after the window has shifted as the comparison against the strictly increasing row count of the tail will fail (line \ref{alg:lpw-queue-lat-check}). As $\Win{head}{max} \leq \Win{tail}{max}$ (line \ref{alg:lpw-head-overtake}), and $\Win{tail}{max} < \row{l}$ at $t$, it holds that $\Win{head}{max} < \row{l}$ at $t$.
\end{proof}

\begin{theorem}\label{thm:elastic-queue-bound}
    The elastic 2D LpW queue is linearizeable with respect to a FIFO queue with elastic $k$ out-of-order relaxed dequeues, where for every dequeue of $x$, $k = (\Win{enq x}{width} - 1)(\Win{enq x}{depth} + \Win{deq x}{depth} - 1)$.
\end{theorem}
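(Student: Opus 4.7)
My plan is to prove Theorem \ref{thm:elastic-queue-bound} by following the same structural argument as the proof of Theorem \ref{thm:law-queue-bound}, but carefully accounting for the fact that \Win{head} and \Win{tail} can have independently adjustable depths. Linearizability is inherited from the static 2D queue: each sub-queue linearizes at its internal CAS, shifts of \Win{head} and \Win{tail} linearize at the window CAS (lines \ref{alg:lpw-queue-shift-tail} and \ref{alg:lpw-queue-shift-head}), and Lateral enqueues/dequeues linearize the width changes. Combined with Lemma \ref{lemma:queue-passing}, this ensures every dequeue observes a width range consistent with where items were placed.

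For the rank error bound, I would fix an arbitrary dequeue returning $x$ and count items $y$ enqueued (in linearization order) strictly before $x$ that are still present when $x$ is removed. Such a $y$ must satisfy (i) $\row{y} > \Win{deq x}{min}$, because every sub-queue inside the head window had been drained up to $\Win{deq x}{min}$ before \Win{head} reached its current position, and (ii) $\row{y} \leq \Win{enq x}{max}$, because $y$ was either enqueued inside \Win{enq x} or inside a strictly earlier tail window (ordered by the CAS on \textit{TailWindow}). Hence $\row{y} \in (\Win{deq x}{min}, \Win{enq x}{max}]$.

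To bound this row interval, I would combine $\row{x} \geq \Win{enq x}{min} + 1$ with $\row{x} \leq \Win{deq x}{max} = \Win{deq x}{min} + \Win{deq x}{depth}$ to get $\Win{deq x}{min} \geq \Win{enq x}{min} + 1 - \Win{deq x}{depth}$, so the interval has length at most $\Win{enq x}{depth} + \Win{deq x}{depth} - 1$. For the sub-queue direction, Lemma \ref{lemma:queue-passing} together with the head-shift logic (in particular lines \ref{alg:lpw-overlapping-lateral}--\ref{alg:lpw-lateral-limit-max}) guarantees that every row in $(\Win{deq x}{min}, \Win{enq x}{max}]$ was written using a width no larger than \Win{enq x}{width}; thus $y$ sits in one of at most $\Win{enq x}{width} - 1$ sub-queues distinct from $x$'s own (which is strictly FIFO and therefore contributes no out-of-order items). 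Since each such sub-queue holds at most one item per row, multiplying yields $k \leq (\Win{enq x}{width} - 1)(\Win{enq x}{depth} + \Win{deq x}{depth} - 1)$.

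The main obstacle I expect is the width argument: an item $y$ enqueued in an earlier tail window whose width differed from \Win{enq x}{width} could a priori occupy a sub-queue above the \Win{enq x}{width} threshold. Resolving this cleanly requires arguing that whenever such a width mismatch exists, either a Lateral node forces \Win{head} to stop before covering $y$'s rows (via the limit on \textit{new\_window.max} at line \ref{alg:lpw-lateral-limit-max}), so $y$ is processed in a separate head shift, or the earlier window actually used a smaller width, in which case $y$'s sub-queue is still below the threshold. I plan to handle this via a small case analysis on the direction of the width change, using Lemma \ref{lemma:queue-passing} to pin down exactly when the relevant Lateral node is visible to \Win{head}.
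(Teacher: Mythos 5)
Your proposal is correct and follows essentially the same route as the paper's proof: bound $\row{y}$ to the interval between $\Win{deq x}{min}$ and $\Win{enq x}{max}$, use the shared row $\row{x}$ to cap the interval length at $\Win{enq x}{depth}+\Win{deq x}{depth}-1$, invoke Lemma~\ref{lemma:queue-passing} together with the head-shift restriction at line~\ref{alg:lpw-lateral-limit-max} to fix the width over those rows, and multiply by the $\Win{enq x}{width}-1$ sub-queues other than $x$'s own. Your explicit derivation of the $-1$ from $\Win{enq x}{min} < \row{x} \le \Win{deq x}{max}$ is in fact slightly more careful than the paper's phrasing of the same step.
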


\begin{proof}
    Enqueues and non-empty dequeues linearize with a successful update on a MS sub-queue. An empty dequeue linearizes when $\Win{head}{max} = \Win{tail}{max}$ after a double-collect where it sees all sub-queues within $\Win{head}{width}$ empty. Lemma \ref{lemma:queue-passing} together with $\Win{head}{max} = \Win{tail}{max}$ gives that all nodes must be within $\Win{head}{width}$, meaning that empty returns can linearize at a point where the queue was completely empty.
    
    The core observation for proving the out-of-order bound is to observe that the row counts for the sub-queues and the max of the windows strictly increase. For a node $y$ to be enqueued before $x$, and not dequeued before $x$, it must hold that $\Win{enq y} \leq \Win{enq x}$ and $\Win{deq y} \geq \Win{deq x}$. As $\Win{enq x}{min} \le \Win{deq x}{max}$, we can bound the possible row of $y$ by $\Win{deq x}{min} \le \row{y} \leq \Win{enq x}{max}$. Using Lemma \ref{lemma:queue-passing} together with the fact that \Win{head} only spans rows with one \width at a time (line \ref{alg:lpw-lateral-limit-max}), we get that these valid rows for $y$ must have width $\Win{enq x}{width} = \Win{deq x}{width}$. As both \Win{enq x} and \Win{deq x} must share at least $\row{x}$, it holds that $\Win{enq x}{min} \le \Win{deq x}{max}$. This is then used to limit the valid number of $\row{y}$ by $\Win{enq x}{max} - \Win{deq x}{min} = \Win{enq x}{min} + \Win{enq x}{depth} - \Win{deq x}{max} + \Win{deq x}{depth} \leq \Win{enq x}{depth} + \Win{deq x}{depth} - 1$. As all operations within each sub-queue are ordered, we reach the final bound of $(\Win{enq x}{width} - 1)(\Win{enq x}{depth} + \Win{deq x}{depth} - 1)$.
\end{proof}

Similarly to the LpW queue, this argument can be flipped to prove that $k$ also bounds the lateness or delay of the queue.

\subsection{Elastic LpW 2D Stack}

To bound the out-of-order error for an item in the elastic LpW 2D stack. This is done by first proving that the \lateral correctly bounds row widths in Lemma \ref{lemma:stack-invariant}, bounding the size of sub-stacks in Lemma \ref{lemma:substack-lower-bound} and \ref{lemma:substack-upper-bound}, and then deriving the out-of-order bound in Theorem \ref{thm:elastic-stack-bound}.

As in the queue analysis, we use $\Win{push x}$ or $\Win{pop x}$, to denote the window when $x$ was pushed or popped. We also introduce \Win{max x}{depth} to denote the maximum value of  \Win{}{depth} during the lifetime of $x$, and equivalent notation for any window attribute. We denote $\Win{}{shift} = \floor{\Win{}{depth}/2}$, the top row (size) of sub-stack $j$ as $N_j$, and the index an item $x$ is pushed at as $\textit{index}_x$. Finally, we introduce a \textit{width bound} (\width{r}) for each row $r$ as $l.\width$ if there exists a \lateral node $l, l.next.row < r \leq l.row$, or \Win{}{push\_width} if $r > l.row ~ \forall ~ l$ (if this properly bounds the row widths, the \lateral is \textit{consistent}). Due to \lateral nodes being removed from the stack if their row overlaps the next node's row, this width bound is uniquely defined.

\begin{lemma}\label{lemma:stack-invariant}
    At the moment preceding the linearization of each window shift, it holds for each row $r$ and every item $x$ where$ \row{x} = r$, that $\width{r} \geq \textit{index}_x$.
\end{lemma}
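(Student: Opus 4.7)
\medskip\noindent\textbf{Proof plan.} My plan is induction on the sequence of successful window shifts. For the base case, before any shift the stack is empty and the initial \lateral trivially satisfies the invariant. For the inductive step, let $W$ denote the window installed by shift $k$, which remains active until shift $k+1$ linearizes, and assume the invariant holds at the moment preceding shift $k$. I will show it still holds immediately after $W$'s call to \textit{Stabilize} completes and just before the CAS on line \ref{alg:lpw-stack-shift-cas} succeeds.

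I would partition the rows into three regions relative to $W$: (i) $r > W.\textit{max}$, (ii) $W.\textit{min} < r \leq W.\textit{max}$, and (iii) $r \leq W.\textit{min}$. Region (i) can only contain carryover items from before shift $k$ (possible when shift $k$ was DOWN), and the \lateral nodes bounding them sit above $W.\textit{max}$ and are preserved by \textit{LateralNode.Update} on its recursive path. Region (iii) is untouched during $W$ because pops require $r > W.\textit{min}$ and pushes go above existing sub-stack tops, and the base case of \textit{LateralNode.Update} (line \ref{alg:lateral-update-base}) preserves all \lateral nodes with $l.\textit{row} \leq W.\textit{min}$, so the inductive hypothesis transfers directly to these rows.

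The heart of the argument is Region (ii), where new items with indices up to $W.\textit{push\_width}$ may have been pushed during $W$. Following Phase 1 of \textit{Stabilize}, any \lateral node $l$ above $W.\textit{min}$ with $l.\textit{width} \leq W.\textit{push\_width}$ is lowered to $W.\textit{min}$ (line \ref{alg:lat-smaller-width-update}), so the in-window rows it previously bounded become bounded by the next node above (or by $W.\textit{push\_width}$ itself), which suffices to accommodate the new pushes. Phase 2 then handles changes in push width: if it shrank, the scan on line \ref{alg:lat-upper-bound-calc} locates the highest row still occupied by an item with index in $(W.\textit{push\_width},\, W.\textit{last\_push\_width}]$, justifying the installed node's row because no pushes in those columns occur during $W$ and hence the scan is complete at stabilization time; if the width grew, a node at $W.\textit{min}$ records the tighter historical bound for rows at or below it.

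The step I expect to be the main obstacle is the second clause of \textit{LateralNode.Update} (line \ref{alg:lat-wider-if}), where a node with $l.\textit{width} > W.\textit{last\_push\_width}$ is lowered to $\textit{last\_min} = W.\textit{max} - W.\textit{depth}/2$ when the previous shift was DOWN. I need to show $\textit{last\_min}$ is always a conservative (not-too-low) estimate of the previous $\Win{}{min}$, since the DOWN shift's precondition guarantees that all columns with index greater than $W.\textit{last\_push\_width}$ were observed empty down to that previous minimum. When the previous window's depth equaled $W.\textit{depth}$, a direct calculation gives $\textit{last\_min}$ exactly; elastic changes to \depth between shifts, together with the edge case flagged in the algorithm description where the previous $\Win{}{min}$ was smaller than $\depth/2$, require careful bookkeeping to confirm that any deviation is always an overestimate, thereby keeping the bound valid and preserving the invariant.
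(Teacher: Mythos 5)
Your plan follows essentially the same route as the paper's proof: induction over the sequence of window shifts, a case split on rows relative to $\Win{}{min}$, the observation that Phase~1 forces every width bound $\leq \Win{}{push\_width}$ down to $\Win{}{min}$ so that in-window pushes are covered, and the identification of the DOWN-shift lowering to $\textit{last\_min}$ as the delicate step, which the paper resolves exactly as you anticipate (the estimate can only overestimate the previous $\Win{}{min}$, hence is conservative). One minor inaccuracy: \lateral nodes with rows above $\Win{}{max}$ are \emph{not} in general preserved by \textit{Update} (they can be lowered by either branch), but this is harmless because the same argument you give for region (ii) — no sufficiently wide item survives above the previous minimum after a DOWN shift, and new pushes stay within $\Win{}{push\_width}$ — covers those rows as well, which is how the paper treats them by folding everything above $\Win{}{min}$ into a single case.
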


\begin{proof}
    Informally, this lemma states that the \lateral stack properly bounds all rows with widths greater than \Win{}{push\_width} after the call to \textit{Stabilize} at line \ref{alg:sync-lat-decl}. We prove this with induction over the sequence of window shifts, and it is easy to see that it holds at the first window shift, as all nodes will be pushed within \Win{}{put\_width}. Now, if the lemma held at the previous window shift, we check if it continues to hold for the next shift where we shift from \Win{i}.

    First, we inspect rows at and below \Win{i}{min} and note that during the lifetime of \Win{i}, all nodes are pushed above \Win{i}{min}, and \width{r} will not be changed at rows at or below \Win{i}{min} (lines \ref{alg:lateral-update-base}, \ref{alg:lat-smaller-width-update}, \ref{alg:lat-larger-width-update}) from lowering a \lateral node. Thus, if $\Win{i}{push\_width} = \Win{i}{last\_push\_width}$, the lemma continues to hold for all rows lower or equal to \Win{i}{min}. If the $\Win{i}{push\_width} \neq \Win{i}{last\_push\_width}$ a new \lateral node can be inserted with width \Win{i}{last\_push\_width}. This node changes row bounds for rows at or below \Win{i}{min} if there was no other \lateral node above \Win{i}{min} at the shift to \Win{i}, and in that case those rows would have before \Win{i} been bounded by \Win{i}{last\_push\_width}, which is the same as the width bound this node enforces. Therefore, the induction invariant continues to hold for rows at or below \Win{i}{min}.

    Now we inspect rows above \Win{i}{min} to see if the invariant also holds there. Firstly, no row will have a width bound smaller than \Win{}{push\_width}, as smaller widths will be lowered or inserted to or below \Win{i}{min} (lines \ref{alg:lat-smaller-width-update}, \ref{alg:lat-smaller-push-row}). Therefore, only items above \Win{i}{min} outside \Win{i}{push\_width} can break the invariant. In the lowering phase, \lateral nodes $l, l.\width > \Win{i}{last\_push\_width} \land l.\width > \Win{push\_width}$ are lowered iff the shift to \Win{i} was downwards, as then all sub-stacks outside \Win{i}{last\_push\_width} were seen at the bottom of the last window. Thus, if $\Win{i}{push\_width} \geq \Win{i}{last\_push\_width}$ no nodes could have been pushed outside \Win{i}{push\_width} since the shift to \Win{i}, and as all widths bound held then, they will hold at the shift from \Win{i}. Otherwise, if $\Win{i}{push\_width} < \Win{i}{last\_push\_width}$, every row $r, r \leq l.\row$ for the topmost \lateral node $l$ must have a valid bound, and the rows above $l.\row$ were before \Win{i} bounded by \Win{i}{last\_push\_width} which is smaller than the new bound \Win{i}{push\_width}, so the width bounds must hold for all rows in this case as well. 
\end{proof}

\begin{lemma} \label{lemma:substack-lower-bound}
    If $x$ lives on the stack during $'x$, then for any item $y$ pushed during $'x$, $\row{y} \ge \Win{push x}{min} - \Win{max x}{shift}$.
\end{lemma}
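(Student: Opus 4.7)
Since every push happens inside the current window, $\row{y} > \Win{y}{min}$, where $\Win{y}$ is the window active when $y$ is pushed. The lemma will follow once I show that every window $\Win{j}$ current at some moment of $'x$ satisfies $\Win{j}{min} \ge \Win{push x}{min} - \Win{max x}{shift}$; this would in fact yield $\row{y} \ge \Win{push x}{min} - \Win{max x}{shift} + 1$, which is slightly stronger than what is required.

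I plan to prove this uniform lower bound by induction along the sequence of windows that arise during $'x$, starting with $\Win{0} = \Win{push x}$, for which the inequality holds with equality. For the inductive step I distinguish the two kinds of shifts from $\Win{i}$ to $\Win{i+1}$: an upward shift (line \ref{alg:lpw-stack-shift-up-row}) can only raise $\Win{}{min}$, so the bound is preserved trivially; a downward shift (line \ref{alg:lpw-stack-shift-down-row}) lowers $\Win{}{min}$ by exactly $\Win{i+1}{shift} \le \Win{max x}{shift}$, which preserves the bound provided $\Win{i}{min} \ge \Win{push x}{min}$ before the shift. The delicate case is therefore a chain of successive downward shifts that could cumulatively drive $\Win{}{min}$ far below $\Win{push x}{min}$, which I would rule out with an auxiliary claim: while $x$ is alive and $\Win{}{min} < \row{x}$, no downward shift can fire without an intervening upward shift restoring $\Win{}{min} \ge \Win{push x}{min}$.

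The auxiliary claim would rest on the fact that a downward shift only triggers when a pop finds every sub-stack inside $\Win{}{pop\_width}$ with its top at row $\le \Win{}{min}$; since $x$ remains on the stack at $\row{x} > \Win{}{min}$, its sub-stack blocks this predicate as long as it is covered by the pop width. The bridge to the pop width is Lemma \ref{lemma:stack-invariant}: consistency of the \lateral gives $\width{\row{x}} \ge \textit{index}_x$, and via line \ref{alg:lpw-stack-pop-width} this forces $\Win{}{pop\_width} \ge \textit{index}_x$ whenever $\row{x} > \Win{}{min}$. The main obstacle I expect is the elastic width: because $\Win{}{push\_width}$ can shrink over time, one has to verify that the \lateral has already absorbed earlier width changes by the moment each shift predicate is evaluated, so that its bound at $\row{x}$ is tight enough to keep $x$'s sub-stack covered by $\Win{}{pop\_width}$. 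Threading Lemma \ref{lemma:stack-invariant} cleanly through the induction — together with the corner case where $\Win{i}{min}$ sits exactly at $\Win{push x}{min}$ and a single downward shift drops it by the full $\Win{max x}{shift}$ — is where I expect the real work to lie.
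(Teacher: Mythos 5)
Your plan rests on the same mechanism as the paper's proof: a downward shift can only fire after a thread has observed every sub-stack inside \Win{}{pop\_width} at \Win{}{min}, Lemma \ref{lemma:stack-invariant} guarantees that \Win{}{pop\_width} covers $x$'s sub-stack, and that sub-stack cannot be observed below $\row{x}$ while $x$ is alive. The structural difference is that the paper argues by contradiction about the \emph{single} shift that created the window $\Win{y}$ in which the offending item $y$ was pushed (splitting on whether that shift precedes or follows $t_x$), whereas you propagate a lower bound on $\Win{}{min}$ through \emph{every} shift occurring during $'x$. That restructuring is where your plan develops genuine gaps.

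First, the claim that an upward shift ``can only raise $\Win{}{min}$'' is false in the elastic setting. Line \ref{alg:lpw-stack-shift-up-row} adds the \emph{new} half-depth to the old max, so the new min is $\Win{i}{max} - \Win{i+1}{shift} = \Win{i}{min} + \Win{i}{depth} - \Win{i+1}{shift}$, which \emph{decreases} whenever the new depth exceeds twice the old one. The upward case of your induction is therefore not trivial and needs its own argument (e.g., relating $\Win{i}{max}$ to $\row{x}$); without it, a chain of upward shifts with growing depth is not excluded by your invariant. Second, your opening inequality $\row{y} > \Win{y}{min}$ is not justified: a push in the 2D stack validates its sub-stack only against \Win{}{max} (unlike the LaW queue, where insertion within the window is explicitly enforced), and stale operations can leave a sub-stack's top as much as one \shift below the current \Win{}{min} — this is exactly the content of Lemma \ref{lemma:substack-onesided-bound}. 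The correct statement, $\row{y} > \Win{y}{min} - \Win{y}{shift}$, costs an extra $\Win{max x}{shift}$ that your invariant $\Win{j}{min} \ge \Win{push x}{min} - \Win{max x}{shift}$ has no slack to absorb, so even with the induction repaired you would only reach $\row{y} > \Win{push x}{min} - 2\Win{max x}{shift}$, weaker than the lemma (and certainly not the ``slightly stronger'' off-by-one version you announce). The paper avoids both issues because it never needs a bound on $\Win{}{min}$ that survives more than the one shift adjacent to $y$'s push.
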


\begin{proof}
    This is proved by contradiction. Assume $x$ was pushed at time $t_x$ and there exists an item $y, \row{y} \leq \Win{x}{min} - \Win{max x}{shift}$, pushed at time $t_y < t_x$. Call the point in time where a thread shifted the window to \Win{y} as $t_s$. 
    \begin{itemize}
        \item If $t_s < t_x$, then $x$ must be pushed in the same, or a later window than \Win{y}. But it cannot be later as $y$ is pushed during \Win{y}, and $t_y > t_x$. Furthermore, as $t_x$ can't be during \Win{y} either, as an item is pushed at or below \Win{}{max}.
        \item If instead $t_s > t_x$, we call the window before \Win{y} as \Win{z}. For a thread to shift from \Win{z}, it must have seen $\forall j, N_j = \Win{z}{min}$ (as Lemma \ref{lemma:stack-invariant} shows iterating over \Win{}{pop\_width} is the same as iterating over all $j$) at some point $t_z$ (set $t_z$ as the time it started this iteration) during \Win{z}. As $\Win{z}{min} < \row{x}$, $t_x > t_z$, which means that $t_x$ must have been during \Win{z}, as we above showed $t_x < t_s$. This is impossible as during \Win{z}, items are pushed at or below \Win{z}{max}.
    \end{itemize}
\end{proof}

\begin{lemma} \label{lemma:substack-upper-bound}
    If $x$ lives on the stack during $'x$, then for any item $y$ pushed, but not popped, during $'x$, $\row{y} \leq \Win{pop x}{max} + \Win{max x}{shift}$.
\end{lemma}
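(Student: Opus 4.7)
The plan is to mirror the contradiction structure of Lemma \ref{lemma:substack-lower-bound} in the upward direction, using Lemma \ref{lemma:stack-invariant} as the extra ingredient that keeps an unpopped $y$ visible to every pop iteration that would otherwise shift the window past its row. I would suppose for contradiction that $y$ is pushed at time $t_y$ during $'x$, is not popped before $x$, and has $\row{y} > \Win{pop x}{max} + \Win{max x}{shift}$. Since $y$'s push succeeds at $\row{y}$, the window active at $t_y$ has $\Win{}{max} \geq \row{y}$, while $\Win{pop x}{max} < \row{y}$ by assumption. Hence there is a last window $\Win{a}$ in the portion of $'x$ following $t_y$ with $\Win{a}{max} \geq \row{y}$; its successor $\Win{b}$ has $\Win{b}{max} < \row{y}$ via a downward shift, so $\Win{b}{max} = \Win{a}{min} + \Win{b}{shift}$.

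I would next split on $\Win{a}{min}$. If $\Win{a}{min} \geq \row{y}$, then $\Win{b}{max} \geq \row{y}$, contradicting the choice of $\Win{a}$. Otherwise $\Win{a}{min} < \row{y}$, and the down-shift requires some thread $T$ to have iterated the sub-stacks indexed $1, \ldots, \Win{a}{pop\_width}$ and observed each at $N_j \leq \Win{a}{min}$. By Lemma \ref{lemma:stack-invariant}, $\textit{index}_y \leq \width{\row{y}}$, and since $\row{y} > \Win{a}{min}$ the \lateral node (or the recorded $\Win{}{push\_width}$) supplying $\width{\row{y}}$ is captured by the pop\_width computation on line \ref{alg:lpw-stack-pop-width}, so $\textit{index}_y \leq \Win{a}{pop\_width}$. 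Hence $y$'s sub-stack lies in $T$'s iteration, and since $y$ persists on the stack throughout $'x$, any check performed at $t \geq t_y$ sees $N_{\textit{index}_y} \geq \row{y} > \Win{a}{min}$ and fails.

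The remaining possibility, and the main obstacle, is the concurrency interleaving in which $T$ inspects $y$'s sub-stack at some $t < t_y$ but commits the shift CAS at $t' > t_y$. For the CAS to succeed, $\Win{a}$ must be the active window throughout, so $y$'s push occurs inside $\Win{a}$ and $\row{y} \leq \Win{a}{max} = \Win{a}{min} + 2\Win{a}{shift}$; combining with $\Win{b}{max} = \Win{a}{min} + \Win{b}{shift}$ bounds $\row{y} - \Win{b}{max}$ in terms of consecutive shift sizes, yielding at most $\Win{max x}{shift}$ once the algebra between $\Win{a}{shift}$ and $\Win{b}{shift}$ is tracked. Finally, I would argue that no later window can drop below $\Win{b}{max}$: a subsequent $\Win{c}$ with $\Win{c}{min} < \row{y}$ hits the same pop\_width obstruction, with its iteration now strictly after $t_y$ and therefore failing outright, while a down-shift from $\Win{c}$ with $\Win{c}{min} \geq \row{y}$ yields a successor still above $\row{y}$. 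Therefore $\Win{pop x}{max} \geq \Win{b}{max} \geq \row{y} - \Win{max x}{shift}$, contradicting the assumption.
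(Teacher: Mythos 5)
Your overall strategy matches the paper's: argue by contradiction that the downward shift which first brings \Win{}{max} below $\row{y}$ requires a thread to have observed $y$'s sub-stack at or below \Win{}{min}, which is impossible once $y$ sits above that level. You are in fact more explicit than the paper in two places it glosses over: you justify that the first window whose max drops below $\row{y}$ must arise from a \emph{downward} shift, and you invoke Lemma \ref{lemma:stack-invariant} to argue $\textit{index}_y \leq \Win{}{pop\_width}$, so that $y$'s sub-stack really is visited by the iterating thread. The main structural difference is the anchor: the paper works from \Win{pop x} and its immediate predecessor \Win{z}, using $\Win{z}{max} < \row{y}$ to conclude that $y$ was pushed before \Win{z} even began, so the observation necessarily happens after $t_y$ and the race you worry about cannot occur; you instead anchor at the last window \Win{a} with $\Win{a}{max} \geq \row{y}$, which forces you to confront the interleaving in which the observation of $y$'s sub-stack precedes $t_y$.

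That interleaving case is where your proof has a genuine gap. From $\row{y} \leq \Win{a}{max} = \Win{a}{min} + \Win{a}{depth}$ and $\Win{b}{max} = \Win{a}{min} + \Win{b}{shift}$ you get $\row{y} - \Win{b}{max} \leq \Win{a}{depth} - \Win{b}{shift} = 2\Win{a}{shift} - \Win{b}{shift}$, and ``tracking the algebra'' does not reduce this to $\Win{max x}{shift}$: whenever $\Win{b}{shift} < \Win{a}{shift}$ (the depth shrinks across the shift), the quantity $2\Win{a}{shift} - \Win{b}{shift}$ strictly exceeds $\Win{max x}{shift} \geq \Win{a}{shift}$. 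So even after your (correct) argument that no later window drops below $\Win{b}{max}$, you only obtain $\row{y} \leq \Win{pop x}{max} + 2\Win{max x}{shift} - \Win{b}{shift}$, which coincides with the claimed bound only when the shift size is constant, i.e., in the static setting. To close this you would need either an extra argument that an item pushed inside the race window cannot land as high as $\Win{a}{max}$ relative to $\Win{b}{max}$, or to accept a slack of $2\Win{max x}{shift}$ in your version of the lemma (which would then propagate into Theorem \ref{thm:elastic-stack-bound}). Note that the paper's own proof leans on the analogous inequality $\Win{z}{max} < \row{y}$, which has the same character once depths vary across shifts, so you have correctly located the crux of the lemma --- you just have not discharged it.
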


\begin{proof}
    This is similar to the last lemma and is also proved by contradiction, assuming that $x$ was pushed at $t_x$, popped at time $t'_x$ and that there exists an item $y, \row{y} > \Win{pop x}{max} + \Win{max x}{shift}$ pushed at $t_y$ and not popped at $t'_x$, where $t_x < t_y < t'_x$. Call the point in time where a thread shifted to \Win{pop x} as $t_s$.
    \begin{itemize}
        \item If $t_s < t_y$, then $y$ must have been pushed in the same window as $x$ was popped. But items are only pushed at or below \Win{}{max}, which contradicts the assumption, as $y$ is pushed too low.
        \item If $t_y < t_s$, we call the window proceeding \Win{pop x} as \Win{z}. For a thread to shift from \Win{z}, it must have seen $\forall j, N_j = \Win{z}{min}$ at some point $t_z$ (set $t_z$ as the time it started the iteration, seeing the first $N_j = \Win{z}{min}$) during \Win{z}. Therefore, $t_z < t'_y$, which is impossible as $\Win{z}{max} < \row{y}$ and $t_y < t_s$.
    \end{itemize}
\end{proof}

\begin{theorem} \label{thm:elastic-stack-bound}
    The elastic 2D stack is linearizeable with respect to a stack with $k$ out-of-order relaxed non-empty pops, where $k$ is bounded for every item $x$ as $k = (\Win{max x}{width} - 1) (3\Win{max x}{depth} - 1)$.
\end{theorem}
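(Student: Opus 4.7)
The plan is to establish linearizability operation-by-operation and then count the items that can sit above $x$ in LIFO order at the moment a pop returning $x$ linearizes. For linearizability, each push and each non-empty pop linearize at the successful CAS on their sub-stack, window shifts linearize at the 16-byte CAS on line \ref{alg:lpw-stack-shift-cas}, and an empty pop linearizes at the instant of a successful double-collect that observes every sub-stack within \Win{}{pop\_width} empty while the window is unchanged; Lemma \ref{lemma:stack-invariant} is what guarantees that such a witness really certifies emptiness of the entire stack, since no item can sit at an index larger than the \lateral-maintained width bound.

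For the rank error, consider an item $y \ne x$ that lies above $x$ in LIFO order at the pop's linearization, i.e., pushed after $x$ and not yet popped. Then $y$ is pushed during $'x$ and not popped during $'x$, so both Lemma \ref{lemma:substack-lower-bound} and Lemma \ref{lemma:substack-upper-bound} apply and yield
\begin{equation*}
\Win{push x}{min} - \Win{max x}{shift} < \row{y} \le \Win{pop x}{max} + \Win{max x}{shift},
\end{equation*}
where the lower inequality is in fact the strict form extracted from the contradiction in the proof of Lemma \ref{lemma:substack-lower-bound}. The key combinatorial observation is that $y$ cannot live in the same sub-stack as $x$: being pushed after $x$ it would sit above $x$ in that sub-stack, and then the sub-stack CAS at which $x$ is popped would require $y$ to have been popped first, contradicting the hypothesis. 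Since every push during $'x$ respects some \Win{}{push\_width} that is bounded by \Win{max x}{width}, at each admissible row at most $\Win{max x}{width}-1$ sub-stacks other than $x$'s can contribute an item.

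It remains to count admissible rows. Combining $\Win{pop x}{min} < \row{x} \le \Win{push x}{max}$ with $\Win{}{min} = \Win{}{max} - \Win{}{depth}$ gives $\Win{pop x}{max} - \Win{push x}{min} \le \Win{pop x}{depth} + \Win{push x}{depth} - 1 \le 2\Win{max x}{depth} - 1$, and from $\Win{}{shift} = \floor{\Win{}{depth}/2}$ one gets $2\Win{max x}{shift} \le \Win{max x}{depth}$; together the number of integer rows in the admissible range is at most $3\Win{max x}{depth} - 1$. Multiplying by the column bound yields the claimed $(\Win{max x}{width}-1)(3\Win{max x}{depth}-1)$. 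I expect the main obstacle to be the off-by-one accounting: the tight bound hinges on the Lemma \ref{lemma:substack-lower-bound} inequality being strict on the integer lattice and on $\row{x}$ sitting strictly above $\Win{pop x}{min}$ but at or below $\Win{push x}{max}$, so the proof must carefully invoke these integer-level strict inequalities rather than the weaker non-strict versions that appear in the lemma statements.
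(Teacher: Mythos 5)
Your proof follows essentially the same route as the paper's: invoke Lemmas \ref{lemma:substack-lower-bound} and \ref{lemma:substack-upper-bound} to confine $\row{y}$ to the interval between $\Win{push x}{min} - \Win{max x}{shift}$ and $\Win{pop x}{max} + \Win{max x}{shift}$, use the internal ordering of $x$'s own sub-stack to restrict attention to the other $\Win{max x}{width}-1$ sub-stacks, and bound the row count via $\Win{pop x}{max} - \Win{push x}{min} \le 2\Win{max x}{depth}-1$ and $2\Win{max x}{shift} \le \Win{max x}{depth}$. You actually supply more detail than the paper does — the explicit linearization points and the off-by-one accounting (using the strict lower inequality that Lemma \ref{lemma:substack-lower-bound}'s contradiction argument really delivers) are left implicit there — so your write-up is a correct and slightly more careful rendering of the same argument.
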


\begin{proof}
    Assume that $y$ is pushed after $x$ and not popped before $x$. Then Lemma \ref{lemma:substack-lower-bound} and \ref{lemma:substack-upper-bound} gives $\Win{push x}{min} - \Win{max x}{shift} \le \row{y} \leq \Win{pop x}{max} + \Win{max x}{shift}$. As each sub-stack is internally ordered and the maximum number of items pushed after $x$ and not popped before $x$ becomes $(\Win{max x}{push\_width} - 1)(\Win{pop x}{max} - \Win{push x}{min} + 2\Win{max x}{shift}) \leq (\Win{max x}{push\_width} - 1)(3\Win{max x}{depth} - 1)$.
\end{proof}

    \section{Evaluation}\label{sec:evaluation}

We experimentally evaluate the scalability and elastic capabilities of our elastically relaxed LaW queue, LpW queue, and LpW stack. All experiments run on an AMD-based machine with a 128-core 2.25GHz AMD EPYC 9754 with two-way hyperthreading, 256 MB L3 cache, and 755 GB of RAM. The machine runs openSUSE Tumbleweed 20240303 which uses the 7.4.1 Linux kernel. All experiments are written in C and are compiled with gcc 13.2.1 and its highest optimization level. In all tests, threads are pinned in a round-robin fashion between cores, starting to use hyperthreading after 128 threads.

Our elastic 2D implementations build on an optimized version of the earlier 2D framework\,\cite{2D}. We use SSMEM from \cite{ASCYLIB} for memory management, which includes an epoch-based garbage collector for our dynamic memory. \kvgcom{We should publish the code at some point}

\subsection{Static Relaxation}

To understand the performance of our data structures under static relaxation, we compare their scalability against that of well-known relaxed and strict designs. For the queues, we select the static 2D queue\,\cite{2D} and the k-segment queue\,\cite{k-queue} as baselines for modern k-out-of-order designs. We selected the wait-free (WF) queue from \cite{wfqueue} as the state-of-the-art linearizable FIFO queue for our study. Although we also evaluated the LCRQ from \cite{lcrq}, it showed slightly inferior performance in our tests. Additionally, we included the Michael-Scott (MS) queue \cite{ms-queue} as a baseline.  For the stacks we similarly selected the 2D stack\,\cite{2D} and the k-segment stack\,\cite{quantitative-relaxation} for relaxed designs. We then also compared with a lock-free elimination-backoff stack\,\cite{elimination-backoff-stack} and the Treiber stack as a baseline\,\cite{treiber-stack}. All data structures were implemented in our framework using SSMEM \cite{ASCYLIB} for memory management, with the exception of the WF queue, for which we used the authors' implementation that employs hazard pointers.

We use a benchmark where threads over $1$ second repetedly perform insert or remove operations at random every iteration. Each data structure is pre-filled with $2^{19}$ items to avoid empty returns, which significantly alter the performance profile. Test results are aggregated over 10 runs, with standard deviation included in the plots. The test bounds the rank error of the data structures, and as the optimal choice of \Win{}{width} and \Win{}{depth} is not known\,\cite{2D}, we simply set $\Win{}{width} = 2\times\textit{nbr\_threads}$, and use the maximum \depth to stay within the bound, which is simple and gives acceptable scalability.

Measuring rank errors without altering their distribution is an open problem, and we adapt the method used, e.g. \cite{2D,spraylist,multiqueues-short} to our algorithms. It encapsulates the linearization points of all methods by global locks, imposing a total order on all operations, and keeps a sequentially ordered data structure on the side. After each removal operation that returns $x$, the distance between $x$ and the top item gives the rank error. This strategy greatly reduces throughput, as it serializes all operations, so all rank error measurements are done in separate runs from throughput runs. 

Figure \ref{fig:static-scalability} shows how the queues and stacks scale with both threads and relaxation. The results show that the elastic designs scale essentially as well as the static 2D framework and out-scale the other data structures. This means that the overhead induced by \lateral is minimal in periods of static relaxation. This is due to the extra work from the \lateral being mostly confined to small checks during window shifts, and otherwise functions as the static 2D structures.

\begin{figure}[!ht]
    \centering
    \includegraphics[width=0.45\textwidth]{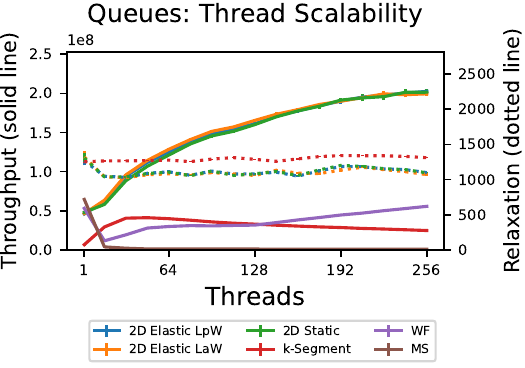}
    \hspace{0.05\textwidth}
    \includegraphics[width=0.45\textwidth]{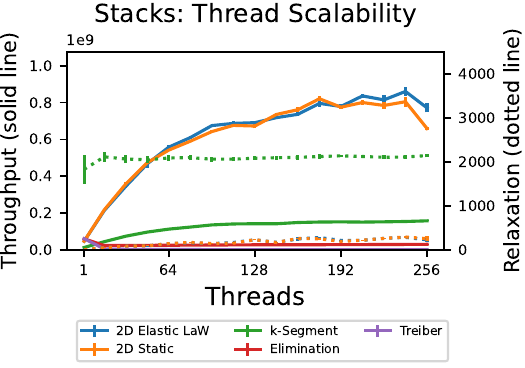}

    \vspace{0.5em}

    \includegraphics[width=0.45\textwidth]{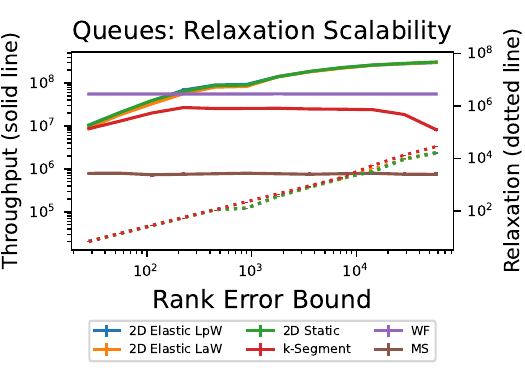}
    \hspace{0.05\textwidth}
    \includegraphics[width=0.45\textwidth]{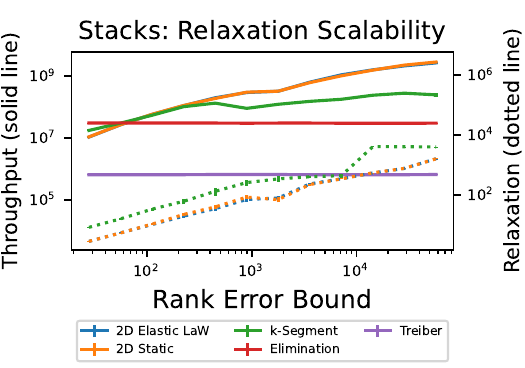}

    \caption{Scalability of throughput and rank error during static relaxation. When scaling with threads (top row), the error bound is fixed as $k=5\times10^3$. When scaling with error bound, 256 threads are used.}
    \label{fig:static-scalability}
\end{figure}

\subsection{Elastic Relaxation - Manual adjustments}

Here, we demonstrate that our elastic designs can efficiently trade accuracy for throughput by showing real-time relaxation and throughput metrics for a benchmark with several elastic changes in relaxation. Figure \ref{fig:elastic-changes-manual} plots throughput as well as average rank errors for a single run for our different designs. Vertical lines represent user-initiated elastic relaxation adjustments. Our data structures are pre-filled with $2^{15}$ items, as to avoid empty returns while keeping the number of items relatively small. The threads all save a timestamp every 1000 operations to measure the throughput and the throughput values are then aggregated with a moving average window of size $25$~ms. All 128 threads randomly alternate between inserting and removing items.

Measuring the rank errors is done in a separate run from the throughput, as the measurements impose a total ordering on all operations, massively slowing down the throughput. To align the relaxation and throughput measurements, we ran the relaxation test for $10^3$ times as long for the queue ($2\cdot10^3$ for the stack) and subsequently compressed the relaxation measurements to the same time frame. These factors were used because they were the average slowdown of the throughput when measuring relaxation during the experiment. The rank errors are also smoothed out with the same moving average window of $25$~ms.

\begin{figure}[!ht]
    \centering
    \includegraphics[width=0.49\textwidth]{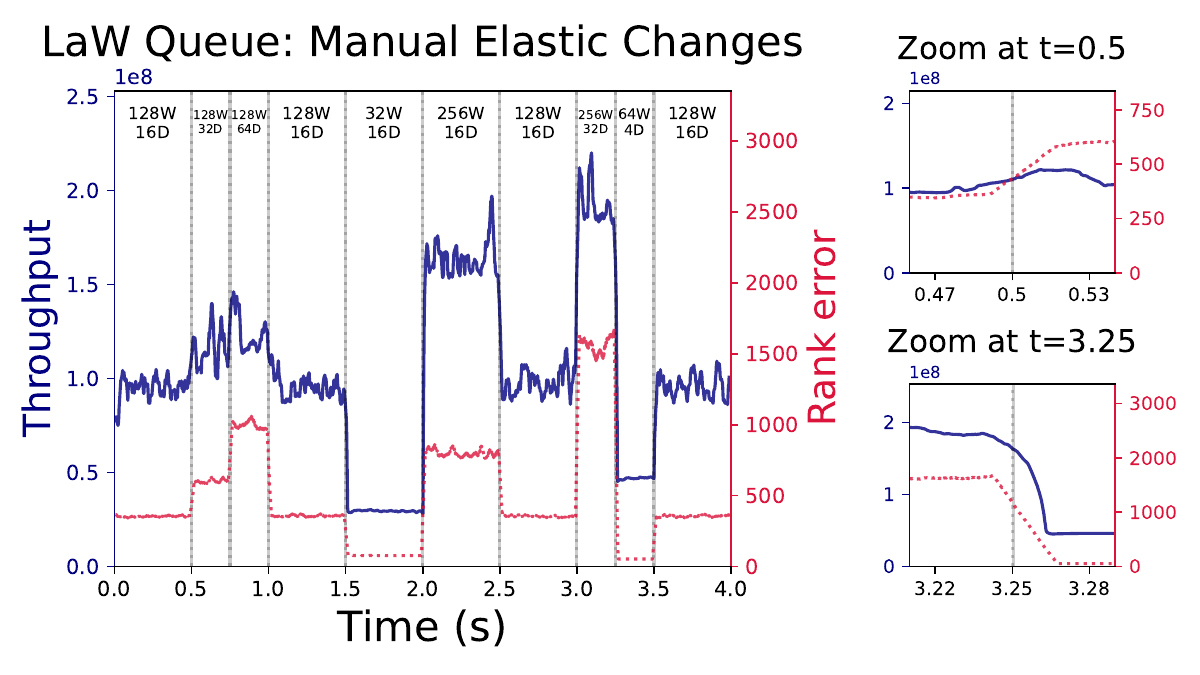}
    \hfill
    \includegraphics[width=0.49\textwidth]{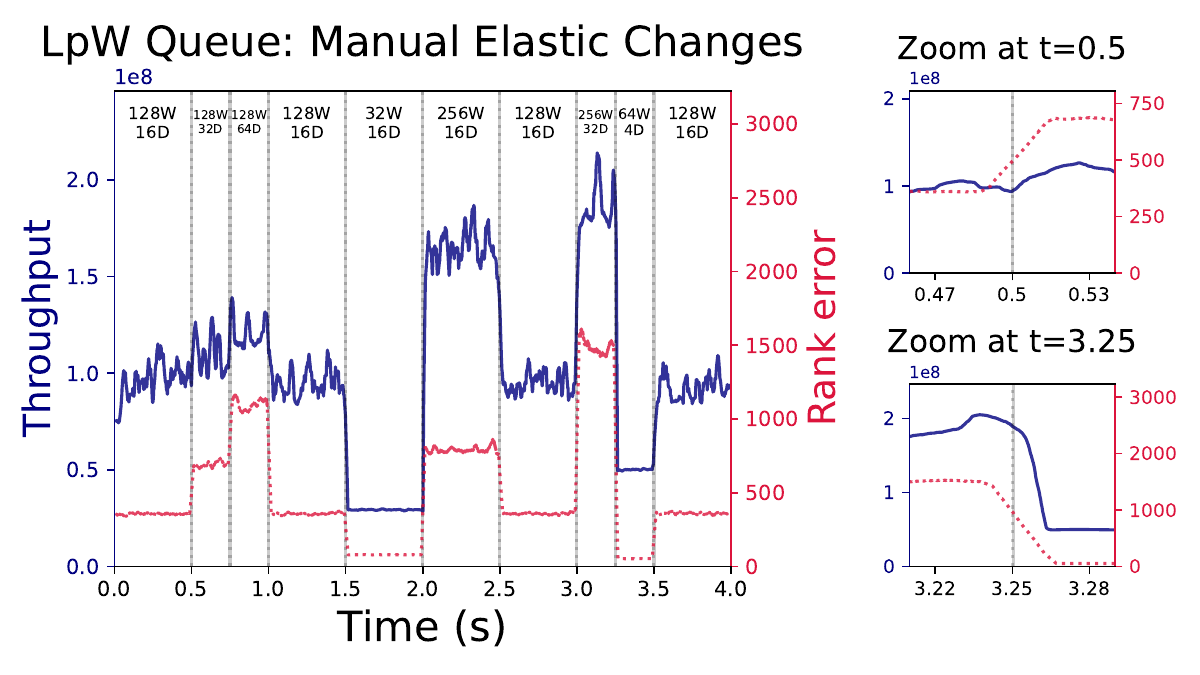}
    \hfill
    \includegraphics[width=0.49\textwidth]{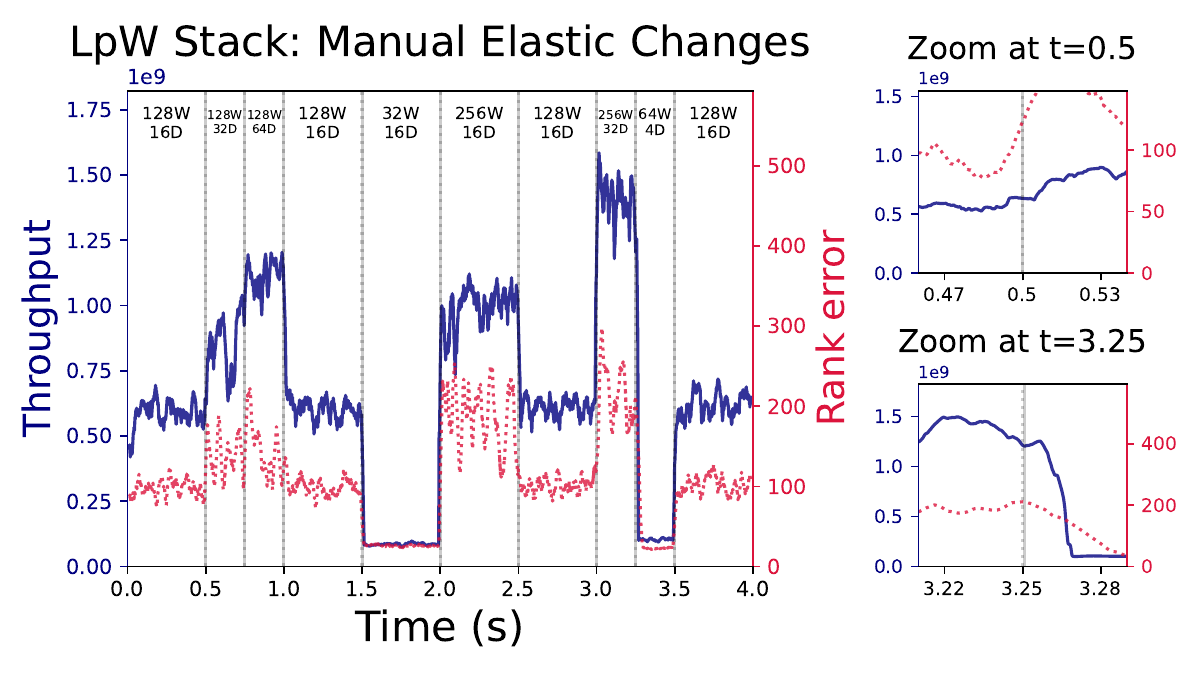}
    \caption{Performance during one run, where threads repeatedly insert or remove items at random, and vertical lines indicate user-initiated elastic changes. It runs for four seconds with 128 threads. The width (W) and depth (D) are annotated for every relaxation period.}
    \label{fig:elastic-changes-manual}
\end{figure}

As show in Figure \ref{fig:elastic-changes-manual}, the elastic relaxation works and can quickly change throughput and rank errors. For the queues, the rank error measurements are relatively stable, and so is the throughput, althought to a lesser extent. The throughput is dependent on randomness and how the threads select sub-queues to operate on, which leads to variance in the performance over time. The stack has similar variance in throughput, but more erratic average rank errors. The rank error bound for the stack does not only depend on the current window, but also previous ones, which means it can sometimes change unexpectedly. The stack also has a much more erratic performance profile, as the threads can work completely thread-local on a sub-stack for a long time in the best case with close to 0 rank error, but in the worst case can have to jump around a lot.  

Although requiring externally initiated changes in relaxation, the plots show how swiftly the designs can trade accracy for throughput. It is evident that the optimial choice of \width and \depth depends on the data structure, as well as the workload, meaning these plots can be of use when creating applications where you want to control the relaxation differently during different time periods. 

\subsection{Elastic Relaxation - Dynamic adjustments}

Here we showcase the elastic capabilities of our designs by implementing a simple controller that dynamically controls \Win{}{width} in a realistic use case with dynamic workload. Consider a shared queue where tasks can be added and removed in FIFO order. We let one third (42) of the 128 cores act as consumers of this queue, constantly trying to dequeue tasks from it. The remaining (86) cores are designated as producers, repeatedly adding tasks to the queue, though not always active, as illustrated in the top graphs of Figure \ref{fig:variable-workload}. This simulates a task queue in a highly contended server, where the consumers are internal workers working at a constant pace, and the load of the producers vary depending on external factors. Our goal is to control the relaxation to cope with the dynamic nature of this producer workload.

To cope with this dynamic nature, our relaxation controller strives to keep the operational latency approximately constant for all producers. To minimize overhead, this controller is thread-local and only tracks failed and successful CAS linearizations, and is shown in Algorithm \ref{alg:elastic-queue-controller}. It increments or decrements a \textit{contention} count by \textit{SUCC\_INC} or \textit{FAIL\_DEC} depending on if the CAS linearization on a sub-queue succeeds or fails respectively (line \ref{alg:controller-inc}). If $|\textit{contention}| > \textit{CONT\_TRESHOLD}$, the thread resets \textit{contention}, adds a local vote for increasing or decreasing the next \Win{tail}{width} by \textit{WIDTH\_DIFF}, and depending on its votes tries to change \textit{width\_shared} (used at lines \ref{alg:law-queue-width} and \ref{alg:lpw-queue-width-shared}). When the window shifts, the local vote count is reset. These values can be tuned, but are in our experiments set as $\textit{SUCC\_INC} = 1, \textit{FAIL\_DEC} = 75, \textit{CONT\_TRESHOLD} = 5000, \textit{WIDTH\_DIFF} = 5$.

\begin{algorithm}[!ht]
\SetKwProg{Func}{function}{}{}
\SetKwProg{Method}{method}{}{}
\SetKwProg{Struct}{struct}{}{}
\SetKwComment{NormComment}{// }{}

\caption{Simple relaxation controller for the Elastic 2D queues}\label{alg:elastic-queue-controller}
\scriptsize

\Struct{Controller}{
    uint contention\; 
    uint version\;
    uint votes\;
}

\Method{Controller.Update(self, cas\_success) \label{alg:controller-update}} {
    \If{self.version $\neq$ \Win{tail}{max}}
    {
        self.version $\gets$ \Win{tail}{max}\;
        self.votes $\gets$ 0\;
    }

    self.contention $\gets$ self.contention + \textbf{if} cas\_success \textbf{then} SUCC\_INC \textbf{else} -FAIL\_DEC\; \label{alg:controller-inc}

    \If{$|$self.contention$|$ $\geq$ CONT\_TRESHOLD}
    {
        self.votes $\gets$ self.votes + sign(self.contention)\;
        self.contention $\gets$ 0\;
        width$_\textit{shared}$ $\gets$ \Win{tail}{width} - WIDTH\_DIFF $*$ sign(self.votes)\;
    }
}
\end{algorithm}

Figure \ref{fig:variable-workload} shows the average thread operational latency, as well as the error bound $(\Win{head}{width} - 1)\times\Win{head}{depth}$ averaged over 50 runs for our elastic LaW queue. It also shows $\textit{tail error} = (\Win{tail}{width} - 1)\times\Win{tail}{depth}$ which can be interpreted as a rank error for enqueues, and shows that the controller adapts quickly. However, it notes occasional delays between \Win{tail}{width} and \Win{head}{width}, as the change has to propagate through the queue. The figure shows a scenario without the dynamic controller, one short test over $1$ second, and one long test over $1$ minute.

\begin{figure}[!ht]
    \centering
    
    \begin{subfigure}[b]{0.45\textwidth}
        \centering
        \includegraphics[width=\textwidth]{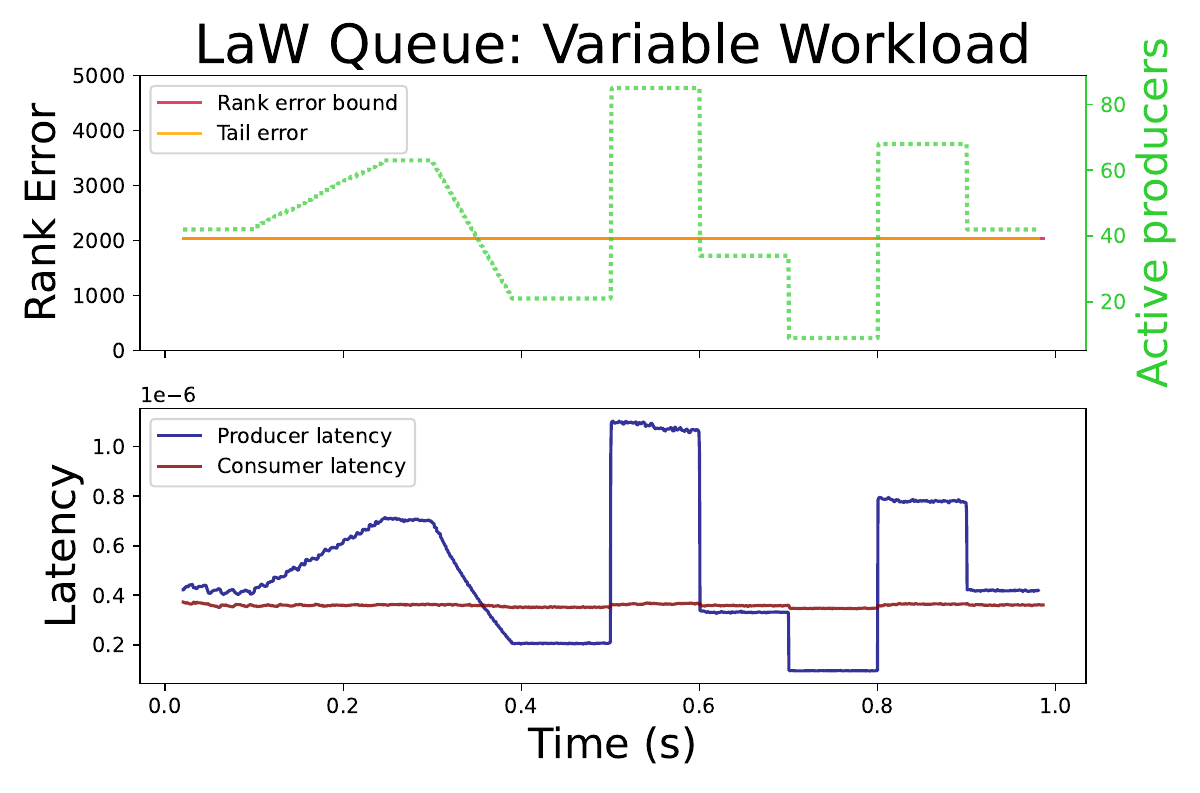}
        \caption{Static Relaxation}
        \label{fig:uncontrolled}
    \end{subfigure}
    \hfill 
    \begin{subfigure}[b]{0.45\textwidth}
        \centering
        \includegraphics[width=\textwidth]{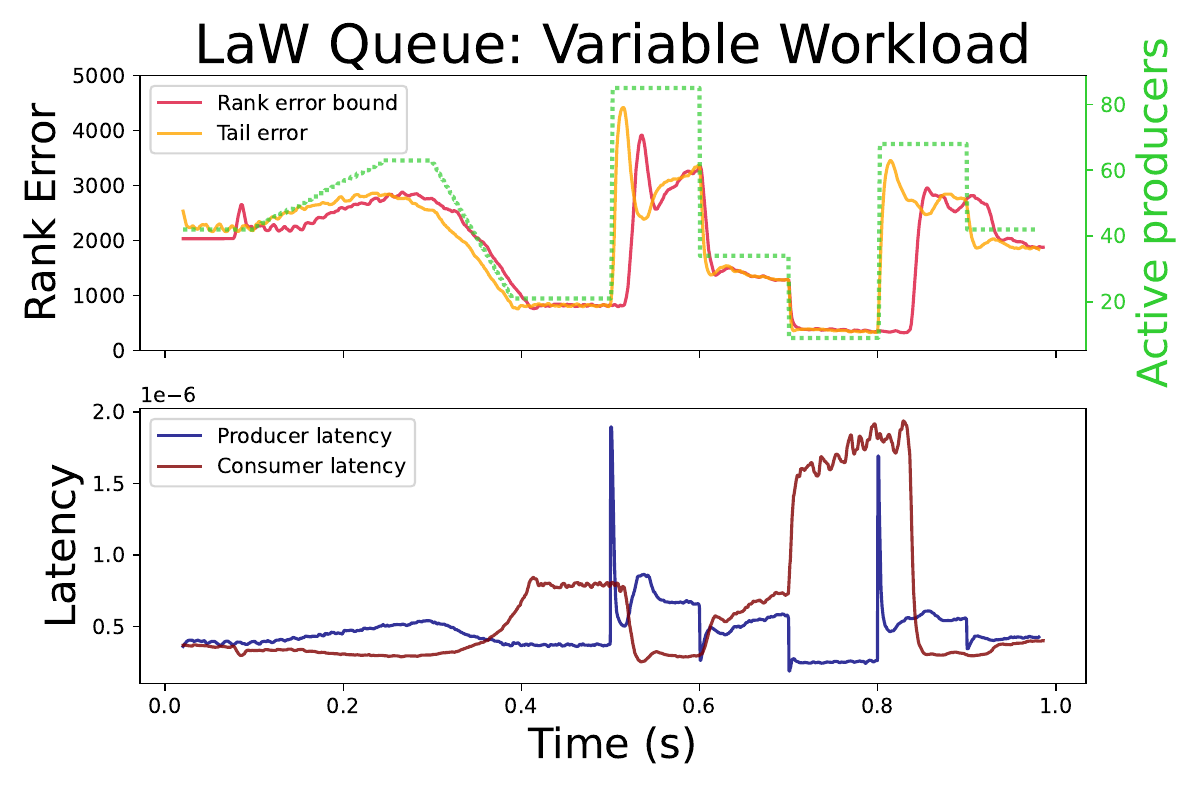}
        \caption{Dynamic Relaxation, 1 second}
        \label{fig:controlled_1s}
    \end{subfigure}
    \hfill 
    \begin{subfigure}[b]{0.45\textwidth}
        \centering
        \includegraphics[width=\textwidth]{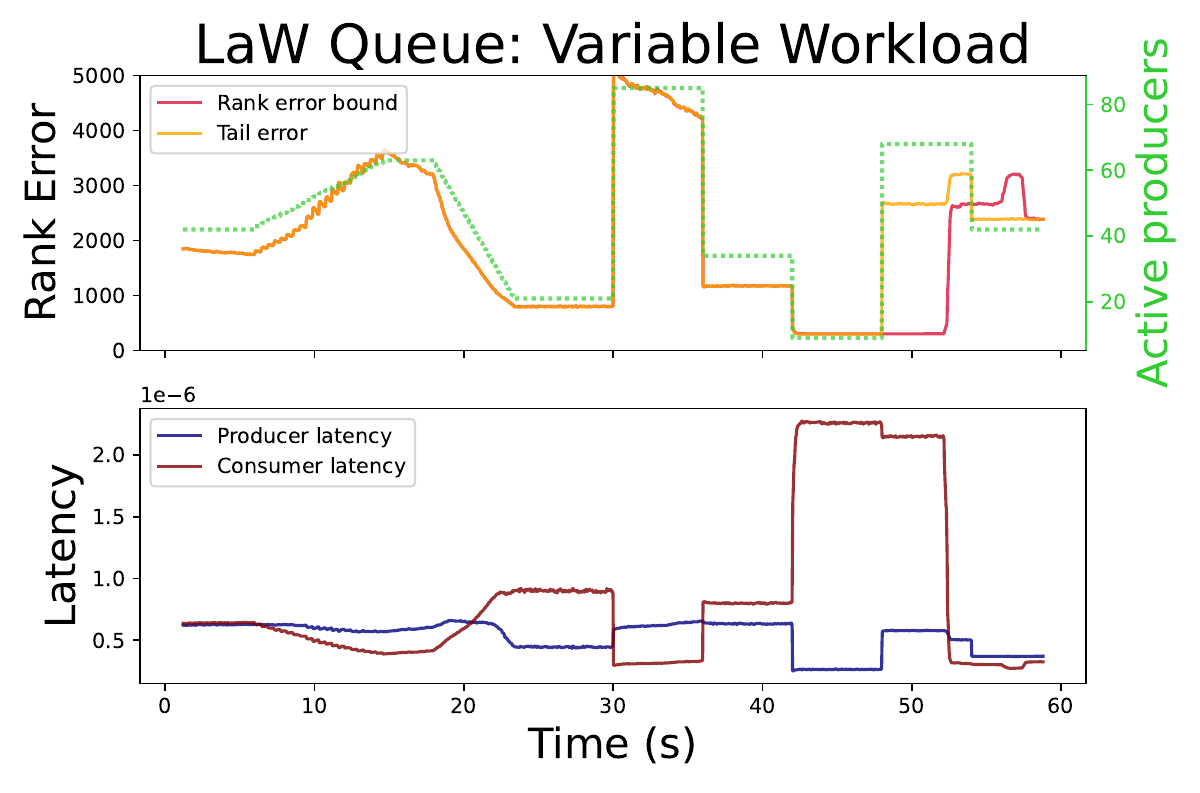}
        \caption{Dynamic Relaxation, 1 minute}
        \label{fig:controlled_1m}
    \end{subfigure}

    \caption{Producer-consumer system with a variable number of producers over time for the elastic LaW queue. The \textit{Dynamic Relaxation} plots use a controller to adjust the relaxation, stabilizing the producer's latency.}
    \label{fig:variable-workload}
\end{figure}

\kvgcom{Could probably shorten this a bit}

The test without the controller shows how the latency clearly scales with contention for the producers, while the consumers are mostly unaffected. However, using the controller, the producers' latency is much more stable. While latency spikes are evident with rapid increases in contention, the controller effectively stabilizes them. While producers enjoy stable latency, consumers must accommodate the significant variations in relaxation. Furthermore, it is evident that the controller quickly adjusts \Win{tail}{width} based on the number of producers. However, at some points, it takes a while for this change to reach \Win{head} and affect the rank error bound, as it must propagate through the whole queue.

This experiment shows a practical use-case for the elastic queue, and that a simple thread-local controller for the \width is enough to get good dynamic trade-offs between relaxation and latency. Similar controllers could easily be created to target different use-cases, such as those where we care about the performance of both the producers and consumers. For example, using the elastic LaW queue, a controller could control \Win{head}{depth} and \Win{tail}{depth} separately in combination with \Win{}{width}, which would lead to more flexible adaptation. To fully leverage such a controller, it would be helpful to design a model for the queue performance, so that the choices of \depth and \width could be made with more information.

    \section{Conclusions}\label{sec:conclusion}

We have presented the concept of elastic relaxation for concurrent data structures, and extended the 2D relaxed framework from \cite{2D} to encompass elasticity. The history of elastic changes is tracked by the \lateral structure, which can also be used to extend other $k$ out-of-order data structures. Our designs have established worst-case bounds, and demonstrate as good performance during periods of constant relaxation as state-of-the-art designs, while also being able to reconfigure their relaxation on the fly. Our simple controller, based on thread-local contention, demonstrated that the elasticity can be utilized to effectively trade relaxation for latency. 
We believe elastic relaxation is essential for relaxed data structures to become realistically viable and see this paper as a first step in that direction.

As further work, we find constructing a model over the data structure performance interesting, which could aid in designing more sophisticated relaxation controllers. Another direction is applying the idea of elasticity to other data structures, such as relaxed priority queues.

    \bibliography{refs}
    
    \newpage
    \appendix

    \section{New bound for static 2D stack} \label{app:2Dc-stack-bound}

\comment{
    \newcommand{\sDDc}{2Dc-Stack\xspace}
    \newcommand{\koo}{k-out-of-order\xspace}
    \newcommand{\pusho}{Push\xspace}
    \newcommand{\popo}{Pop\xspace}
    \newcommand{\substak}{sub-stack\xspace}
    \newcommand{\substaks}{sub-stacks\xspace}
    \newcommand{\window}{Window\xspace}
    \newcommand{\Wmax}{W^{max}\xspace}
    \newcommand{\Wmin}{W^{min}\xspace}
}   

The relaxation bound for the static 2D stack from \cite{2D} (Theorem 5) has a small mistake, and is too low. The error comes from the fact that the proof mistakenly assumes that each item is popped while \Win{}{max} is the same as when it was pushed. If $\depth=2\shift$, the difference between these two \Win{}{max} is \shift. Here we present a new bound and corresponding proof. 

For brevity, we call the number of nodes on sub-stack $j$ as $N_j$, and the window item $x$ was popped $\Win{pop x}$. We also note that \shift does not necessarily have to be $\depth/2$ which we assume in the rest of the paper, as this is not enforced for the proofs in \cite{2D}.

\begin{lemma} \label{lemma:substack-onesided-bound}
    It always holds that $\forall j: \Win{min} - \shift \leq N_j \leq \Win{}{max}$ or $\forall j: \Win{min} \leq N_j \leq \Win{}{max} + \shift$.
\end{lemma}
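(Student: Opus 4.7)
The plan is to prove the invariant by induction on the linearized sequence of operations, with the disjunction in the lemma resolved by the direction of the most recently completed window shift: case 1 (the $\shift$ slack below $\Win{}{min}$) is intended to hold just after every UP shift, and case 2 (the $\shift$ slack above $\Win{}{max}$) just after every DOWN shift. The base case is immediate since initially all $N_j = 0$ and the initial window can be arranged so that case 1 holds.

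For the push/pop step of the induction, I would invoke the basic 2D framework invariants that a push on sub-stack $j$ linearizes only when $N_j < \Win{}{max}$ and a pop only when $N_j > \Win{}{min}$. Thus after such a step every $N_j$ still lies in $[\Win{}{min}, \Win{}{max}]$, a range contained in both cases, so whichever case was in force before the step is preserved.

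For a shift UP, the key observation is that the thread winning the shift CAS must first have iterated over every sub-stack within the width and observed each at the pre-shift $\Win{}{max}$. From each such observation onwards, only pops can change $N_j$ (a push on a saturated sub-stack is forbidden), and those pops are lower-bounded by the pre-shift $\Win{}{min}$. Hence at the instant of the CAS every $N_j$ lies in $[\Win{}{min}, \Win{}{max}]$ with respect to the pre-shift window. Translating to the post-shift window decreases both endpoints by $\shift$, giving $N_j \in [\Win{}{min} - \shift, \Win{}{max} - \shift] \subseteq [\Win{}{min} - \shift, \Win{}{max}]$, which is case 1. The argument for shift DOWN is symmetric and yields case 2.

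The main obstacle I anticipate is justifying that the per-sub-stack observations made during the shifting thread's sequential scan remain usable at the moment of the CAS, since they are not taken as a single atomic snapshot. I would close this gap by noting that the window variables are unchanged between the scan and the successful CAS (otherwise the CAS would fail), so the only operations that can intervene on any individual sub-stack in the meantime are pushes bounded above by the pre-shift $\Win{}{max}$ and pops bounded below by the pre-shift $\Win{}{min}$, which is exactly the bracket the rest of the induction needs.
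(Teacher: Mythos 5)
Your proof is correct in substance, but it takes a genuinely different route from the paper: the paper does not prove this lemma directly at all, instead observing that it is equivalent to the already-established Lemma~4 of the static 2D framework paper and citing that result. Your version is a self-contained induction over the linearized history, tying each disjunct to the direction of the last successful window shift, handling push/pop steps via the validity conditions $N_j < \Win{}{max}$ (push) and $N_j > \Win{}{min}$ (pop), and re-establishing the appropriate disjunct at each shift CAS. What your approach buys is independence from the external reference and an explicit treatment of the one genuinely delicate point --- that the shifting thread's scan of the sub-stacks is not an atomic snapshot --- which you close correctly by noting that the window is pinned between the scan and the successful CAS, so intervening operations remain bracketed by the pre-shift $\Win{}{min}$ and $\Win{}{max}$. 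Two small imprecisions are worth fixing: after a single push or pop it is only the \emph{operated} sub-stack whose count is freshly confined to $[\Win{}{min}, \Win{}{max}]$ (the others merely retain the wider bracket of whichever case was in force, which is all the induction needs); and after a sub-stack is observed saturated it is not true that ``only pops can change $N_j$'' --- a pop can make room for further pushes --- but the interleaving of pops bounded below by $\Win{}{min}$ and pushes bounded above by $\Win{}{max}$ still keeps $N_j$ in $[\Win{}{min}, \Win{}{max}]$ at the CAS, so your conclusion stands.
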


\begin{proof}
This is equivalent to the already proven Lemma 4 in \cite{2D}.
\end{proof}

\begin{theorem} \label{thm:original_six}

The static 2D stack has a $k$ out-of-order relaxation bound of \\
$k=\left(2\depth + 2\shift + \floor{\frac{\depth - 1}{\shift}}\shift\right)(\width-1)$.

\end{theorem}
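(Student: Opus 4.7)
The plan is to bound the rank error of popping $x$ by counting items $y$ pushed strictly after $x$ that remain on the stack at the moment $x$ is popped. The sub-stack $j_x$ containing $x$ contributes zero, since $x$ sits at the top of $j_x$ when popped, and in every other sub-stack $j$ such items $y$ occupy distinct rows, so their number is at most the length of the interval of admissible rows. It therefore suffices to bound $\row{y}$ from above and from below.

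For the upper bound, Lemma~\ref{lemma:substack-onesided-bound} applied at $x$'s pop gives $N_j \leq \Win{pop x}{max} + \shift$, hence $\row{y} \leq \Win{pop x}{max} + \shift$. For the lower bound, the same lemma at $y$'s push instant $t_y$ gives $\row{y} = N_j(t_y) + 1 \geq \Win{}{min}(t_y) - \shift + 1$, so the task reduces to controlling how far $\Win{}{min}$ can drop during $x$'s lifetime. Since $x$ remains on $j_x$ throughout, $N_{j_x} \geq r_x$ always, and any shift-down requires $N_j \leq \Win{}{min}$ for every $j$; hence every shift-down is preceded by $\Win{}{min} \geq r_x$ and leaves $\Win{}{min} \geq r_x - \shift$ afterwards. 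Combined with the initial value $\Win{push x}{min} \geq r_x - \depth$, this yields a uniform lower bound on $\Win{}{min}(t_y)$ for any $y$ pushed while $x$ is alive, and therefore on $\row{y}$.

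The correction over the original argument in \cite{2D} is that $\Win{push x}{max}$ and $\Win{pop x}{max}$ need not agree: both lie in $[r_x, r_x + \depth - 1]$, and because every shift changes $\Win{}{max}$ by exactly $\shift$, their difference is a multiple of $\shift$ of magnitude at most $\floor{(\depth - 1)/\shift}\shift$. Folding this slack into the combined upper and lower bounds on $\row{y}$ yields a row interval of length at most $2\depth + 2\shift + \floor{(\depth - 1)/\shift}\shift$; multiplying by the $(\width - 1)$ non-$j_x$ sub-stacks then gives the claimed $k$. The main obstacle is justifying the uniform lower bound on $\Win{}{min}$ in the presence of arbitrary interleavings of shift-ups and shift-downs during $x$'s lifetime; I would handle this as a static analogue of Lemma~\ref{lemma:substack-lower-bound}, with the invariant $N_{j_x} \geq r_x$ doing the essential work by constraining every shift-down.
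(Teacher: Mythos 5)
Your proposal follows essentially the same route as the paper's proof: the same one-sided Lemma~\ref{lemma:substack-onesided-bound} bounds the sub-stack sizes from above at pop time and from below throughout $x$'s lifetime (with the invariant $N_{j_x} \geq \row{x}$ constraining every shift-down), and the residual gap between window maxima is bounded by $\floor{(\depth-1)/\shift}\shift$ using that successive maxima differ by multiples of \shift. The one imprecision is your claim that both $\Win{push x}{max}$ and $\Win{pop x}{max}$ lie in $[\row{x}, \row{x}+\depth-1]$, which is slightly too strong given the one-sided \shift slack the Lemma permits (a push may land up to \shift below \Win{}{min} and a pop may occur with the top up to \shift above \Win{}{max}); the paper instead anchors the multiple-of-\shift argument to the lowest \Win{}{min} attained during $'x$, and both routes land on the stated constant.
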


\begin{proof}

    This is proved by looking at how many items can be pushed during the lifetime of $x$ ($'x$), without being popped during $'x$. Lemma \ref{lemma:substack-onesided-bound} bounds the size of all sub-stacks when $x$ is popped as $N_j \leq \Win{pop x}{max} + \shift = N_\textit{upper}$.

    Additionally, Lemma \ref{lemma:substack-onesided-bound} bounds the size of all sub-stacks during $'x$ as $N_j \geq \row{x} - \shift - \depth = N_{\textit{lower}}$. Furthermore, the lemma limits $N_{\textit{lower}}$ to be at $\Win{}{min}$, or $\shift$ below $\Win{}{min}$ for some \Win{}{min} during $'x$, meaning (i) $\Win{pop x}{max} - N_{\textit{lower}} = \depth + m\times\shift$ for some integer $m$.

    All items in the same sub-stack as $x$ are correctly ordered. Therefore, the number of items pushed and not popped during $'x$ must be at most $(\width - 1)(N_\textit{upper} - N_\textit{lower}) = (\width - 1)(\Win{pop x}{max} - \row{x} + 2\shift + \depth)$. From (i) we have that $\Win{pop x}{max} - \row{x}$ is an even multiple of \shift. Additionally, as $x$ is popped during $\Win{pop x}$, it holds that $\Win{pop x}{max} - \row{x} \leq \depth - 1$. Combining these two last statements gives that $\Win{pop x}{max} - \row{x} \leq \floor{\frac{depth - 1}{shift}}\shift$, which bounds the number of items pushed, but not popped, during $'x$ as $(\width - 1)(2\shift + \depth + \floor{\frac{depth - 1}{shift}}\shift)$.
 
\end{proof}


\end{document}